\documentclass[11pt]{article}  
\usepackage[fontsize=11.5pt]{scrextend}
\usepackage[text={6.5in,9in}]{geometry}
\usepackage{fullpage,  enumerate, booktabs}
\usepackage[compact]{titlesec}
\usepackage{amsmath, enumitem, bigints}
\usepackage{algorithm, algpseudocode}
\usepackage{graphicx}
\usepackage{wrapfig}
\usepackage{sidecap}
\usepackage[normalem]{ulem} 
\usepackage{multirow}
\usepackage[utf8]{inputenc}
\usepackage{threeparttable}
\usepackage[mathscr]{eucal}
\usepackage{geometry}
\usepackage{times}
\usepackage{courier}
\usepackage{float}
\usepackage{mathtools}
\usepackage{color}
\usepackage{subcaption}
\usepackage{bbm}
\usepackage{array}
\usepackage{booktabs}
\usepackage{multirow}
\usepackage{siunitx, tabularx}
\usepackage{adjustbox}
\usepackage{xr}
\usepackage{titling}
\usepackage{arydshln,,leftidx}
\RequirePackage[colorlinks,citecolor=blue,urlcolor=blue]{hyperref}
\usepackage{verbatim}
\usepackage[font=small,labelfont=bf]{caption}
\usepackage{ upgreek }
\pagestyle{plain}  
\usepackage[round]{natbib} 
\usepackage{color, url}
\usepackage{dsfont}
\usepackage{amssymb}
\usepackage{epstopdf}
\usepackage{bm}
\usepackage{setspace}

\usepackage[compact]{titlesec}
\titlespacing{\section}{0pt}{*0.3}{*0.3}
\titlespacing{\subsection}{0pt}{*0.3}{*0.3}
\titlespacing{\subsubsection}{0pt}{*0.1}{*0.1}

\def\bc{\begin{center}}   
\def\ec{\end{center}}     


\newtheorem{proposition}{Proposition}
\newtheorem{theorem}{Theorem}[section]
\newtheorem{lemma}[theorem]{Lemma}

\newenvironment{proof}{\paragraph{Proof:}}{\hfill$\square$}

\newcommand{\h}[1]{\widehat{#1}} 
\newcommand{\Stacked}[1]{\mathbf{#1}}
\newcommand{\StackedSymbol}[1]{\ensuremath{\boldsymbol{#1}}}
\newcommand{\localvar}[1]{\widetilde{#1}} 

\newcommand{\Mb}{\boldsymbol{\widehat{\beta}}^{\text{\;MLE}}}
\newcommand{\InvFI}{\boldsymbol{\widehat{\mathsf{I}}}^{\; -1}}

\newcommand{\grad}{{\nabla}}

\def\argmin{\mathop{\rm argmin}\limits}

\newcommand{\bi}{\begin{itemize}}
\newcommand{\ei}{\end{itemize}}
\newcommand{\bqa}{\begin{eqnarray*}}
\newcommand{\eqa}{\end{eqnarray*}}
\newcommand{\bqan}{\begin{eqnarray}}
\newcommand{\eqan}{\end{eqnarray}}

\newcommand{\Dg}{\text{Diag}}

\usepackage{natbib}

\newcommand{\twofigs}[2]{
\hbox to\hsize{\hss
\vbox{\psfig{figure=#1,width=2.7in,height=2.0in}}\qquad
\vbox{\psfig{figure=#2,width=2.7in,height=2.0in}}
\hss}}

\linespread{1.3}

\begin{document}

  \title{\bf Selective Inference for Sparse Multitask Regression with Applications in Neuroimaging}
  \author{Snigdha Panigrahi$^{1}$, Natasha Stewart$^{1}$, Chandra Sripada$^{2,3}$, and Elizaveta Levina$^{1}$ \\  $^{1}$Department of Statistics, University of Michigan\\ $^{2}$Department of Psychiatry, University of Michigan\\ $^{3}$Department of  Philosophy, University of Michigan}
    \date{}

\maketitle

Multi-task learning is frequently used  to model a set of related response variables from the same set of features, improving predictive performance and modeling accuracy relative to methods that handle each response variable separately. 
Despite the potential of multi-task learning to yield more powerful inference than single-task alternatives, prior work in this area has largely omitted uncertainty quantification.
Our focus in this paper is a common multi-task problem in neuroimaging, where the goal is to understand the relationship between multiple cognitive task scores (or other subject-level assessments) and brain connectome data collected from imaging. 
We propose a framework for selective inference to address this problem, with the flexibility to: (i) jointly identify the relevant covariates for each task through a sparsity-inducing penalty, and (ii) conduct valid inference in a model based on the estimated sparsity structure. 
Our framework offers a new conditional procedure for inference, based on a refinement of the selection event that yields a tractable selection-adjusted likelihood. This gives an approximate system of estimating equations for maximum likelihood inference, solvable via a single convex optimization problem, and enables us to efficiently form confidence intervals with approximately the correct coverage. Applied to both simulated data and data from the Adolescent Brain Cognitive Development (ABCD) study, our selective inference methods yield tighter confidence intervals than commonly used alternatives, such as data splitting. We also demonstrate through simulations that multi-task learning with selective inference can more accurately recover true signals than single-task methods.

Keywords: Multi-task learning, multi-level Lasso, joint sparsity, post-selection inference, selective inference, neuroimaging, fMRI data

\pagebreak

\section{Introduction}

Humans exhibit a diversity of cognitive abilities, which can be categorized as either fluid or crystallized. Fluid abilities are rooted in problem solving and manipulation of information, independent of prior learning \citep{gray2003neural,blair2006similar}. Examples include the ability to store and manipulate items in short-term memory (working memory) and detect subtle patterns in sequences (matrix reasoning). Crystallized abilities are rooted in determinate facts that are trained through prior learning \citep{cattell1943measurement, horn1997human}, including reading comprehension and vocabulary. According to proponents of a general ability model, although fluid tasks are superficially quite different from crystallized tasks, there is an underlying shared ability that drives performance on both kinds of tasks \citep{spearman1961general, humphreys1979construct, carroll1993human}. The question of whether cognitive abilities are better understood in terms of a single, general ability or separate fluid and crystallized abilities has been debated for over 100 years without a firm consensus emerging. 

Functional neuroimaging offers new opportunities to non-invasively investigate the neurological organization of cognitive abilities, avoiding the need to rely solely on behavioral data. Recent efforts have focused on using neuroimaging data to predict performance across cognitive tasks and enhance cartographic understanding by linking cognitive functions to particular brain networks. Cartographic maps for various cognitive abilities could add evidence for the general ability hypothesis if they show spatial overlap. Alternatively, these maps might show that individual cognitive abilities exhibit spatially distinct patterns in the brain, strengthening the case for neurally separate abilities. 

In this paper, we leverage multi-task learning (MTL) to advance both predictive and cartographic goals in the study of cognitive abilities. MTL is an important tool for modeling related response variables, such as performance on cognitive tasks,  that have common predictors and potentially common patterns of dependence between the predictors and responses. Algorithms that use MTL are known to improve predictive accuracy by accounting for shared information between the tasks. Many different MTL algorithms have been developed, including regression methods that impose inter-task dependence through shared sparsity or low-rank constraints on the multi-task regression coefficients; see \cite{zhang2021survey} and references therein for a thorough survey on the topic. We utilize a regression-based MTL method to identify a potential common neurological basis for fluid and crystallized abilities.

In addition to predicting performance across different cognitive tasks and visualizing the involvement of different brain regions, we also address an important question that has been neglected: statistical inference for the selected neurological models. We introduce a novel, two-step procedure that can recover shared signals in neuroimaging data and subsequently quantify uncertainty via selective inference. The first step in our procedure entails using a randomized MTL algorithm with a sparsity-inducing penalty to jointly identify a model for each of the cognitive tasks. In stage two, we conduct selective inference in the model chosen from the estimated sparsity structure, enabling us to test the significance of the shared signals recovered in the first step. This hypothesis testing framework, which offers the flexibility of choosing the model under which to conduct inference {\em after} examining the data, has the potential to significantly expand our understanding of the relationship between cognition and neurological features extracted from imaging data.

\subsection{Contributions to the Adolescent Brain Cognitive Development (ABCD) Study} 
The behavioral and brain data we study come from the Adolescent Brain Cognitive Development (ABCD) study, one of the most extensive efforts to track the brain development of a large cohort of children in the United States, with close to 12,000 children enrolled across 23 research sites. The ABCD study includes an 11-task neurocognitive battery \citep{luciana2018adolescent}, primarily based on the NIH Toolbox for the Assessment of Neurological and Behavioral Function, with several additional tasks to ensure comprehensive coverage across cognitive domains. Importantly, the ABCD battery includes two classic crystallized ability tasks (picture vocabulary and reading comprehension) as well as two classic fluid ability tasks (list working memory and matrix reasoning). It also uses multiple imaging modalities to track the participants' brain development \citep{hagler2019image}. Our focus is resting-state fMRI, which captures patterns of spontaneous activation throughout the brain while subjects are at rest in the scanner, yielding maps of functional connectivity. Previous studies have investigated the role of functional connectivity in general cognitive ability \citep{finn2015functional, hearne2016functional, sripada2021brain-wide, anderson2022investigating, tong2022transdiagnostic} as well as specific abilities like working memory \citep{markett2018working} and matrix reasoning \citep{fraenz2021interindividual}. 

Few previous studies have jointly modeled performance across cognitive domains to improve predictive performance or to better understand the spatial relationships between tasks. One notable exception is  \cite{adeli2019multi}, who used MTL to jointly predict the development of general and specific cognitive abilities in infants and young children over time. Other studies have implicitly leveraged joint model selection to predict performance on individual cognitive domains using connections associated with general ability \citep{anderson2022investigating, tong2022transdiagnostic}. In contrast to these previous studies, we do not use any measure of general ability computed from behavioral data. Our method relies on MTL to uncover the relationships between tasks from brain-behavior patterns, providing a new framework to assess the shared and distinct neural contributions to fluid and crystallized intelligence. 

A further limitation of the studies mentioned above, as well as most previous methodological developments for MTL, is an emphasis on prediction over inference. Prediction is typically not the final goal in cognitive neuroscience applications, however, since directly measuring cognitive ability from behavioral data is easier and less expensive than acquiring brain imaging data. Furthermore, predictive accuracy has limited explanatory value by itself since very different models can sometimes achieve similar predictive performance.  A combination of prediction and statistical inference could shed light on the neurobiological basis of cognition, identifying brain-behavior relationships and measuring both the strength and the degree of confidence for each relationship.

\subsection{Novel Contributions to Selective Inference} Selective inference has been studied extensively for single-task prediction. A common approach to selective inference, described by \cite{fithian2017optimal}, accounts for bias from model selection by conditioning on the chosen model. The key to this approach is characterizing the selection event in a sufficiently simple form. For models chosen based on the LASSO, \cite{exact_lasso} developed the influential polyhedral method, reducing the selection event to a series of linear inequalities in the response variable. Many different model selection events for a Gaussian response variable have subsequently been identified with a similar set of inequalities, yielding conditional distributions that can be used for inference  \citep[][among others]{suzumura2017selective, liu2018more, taylor2018post, zhao2019selective, tanizaki2020computing}. 

Despite the convenience of the polyhedral method, the resulting confidence intervals can have infinite expected length for Gaussian regression \citep{kivaranovic2018expected}. The loss in inferential power can be remedied by applying conditional inference to a randomized problem, for example, by adding random noise to the response variable \citep{randomized_response} or by holding out some samples during selection, known as data carving \citep{fithian2017optimal, panigrahi2019carving}. Under randomized versions of single-task algorithms such as the LASSO, \cite{randomized_response} obtained a pivot for each selected parameter after eliminating other (nuisance) parameters. More recently, \cite{panigrahi2019approximate} and \cite{panigrahi2018scalable, selective_bayesian, panigrahi2022integrativeradio} build on the polyhedral method to introduce a tractable likelihood that allows for both frequentist and Bayesian selective inference, using a prior in conjunction with the likelihood for the latter. 

Unfortunately, the  polyhedral methods for single-task algorithms do not generalize well to the multi-task setting because the usual conditioning event for single-task methods does not have a simple characterization under joint model selection. Remarkably, there is a proper subset of the usual conditioning event that makes selective inference feasible for our particular MTL procedure without sacrificing much power, leading to an easy-to-solve system of estimating equations. To the best of our knowledge, this is the first selective inference method for the multi-task setting.  

The remaining paper is organized as follows. Section \ref{sec:method} presents our algorithm for estimating the shared sparsity structure.
Section \ref{sec:MLE} develops our method for MLE-based selective inference.  
In Section \ref{sec:ABCD}, we apply our methods to the ABCD study data for identifying the neurobiological underpinnings of fluid and crystallized intelligence. 
A brief discussion in Section \ref{discussion} concludes our paper.

\section{Multi-Task Learning for Joint Model Selection}
\label{sec:method}

\subsection{The Two-Stage Model Selection and Inference Protocol}
\label{sec:two_stage}

For ease of presentation, we first review the two-stage protocol for model selection and subsequent selective inference for the (single-task) randomized LASSO described in \cite{panigrahi2019approximate}.  
Suppose we observe a predictor matrix 
$X \in \mathbb{R}^{n \times p}$
with fixed entries,  a random response vector  $y \sim N(\mu, \Sigma) \in \mathbb{R}^n$ with $\mu$ unknown and $\Sigma$ known, and an independent randomization variable $\omega \sim N(0,\Omega) \in \mathbb{R}^p$ with known $\Omega$. 
In the first stage, we identify a sparse linear model through a randomized LASSO regression.
We estimate the regression coefficients, $\Theta \in \mathbb{R}^p$, by solving
\begin{equation}
    \h{\Theta} = \argmin_{\Theta} \mathcal{L}\left(\Theta; y, X, \omega \right) + \| \Lambda \Theta \|_1,
\end{equation}
where $\Lambda \in \mathbb{R}^{p \times p}$ represents a diagonal matrix of feature-specific tuning parameters. 
The randomized loss $\mathcal{L} \left( \cdot; y, X, \omega\right)$ is given by
\begin{equation}
\label{Randomized Loss}
\mathcal{L}\left(\Theta; y, X, \omega \right) = \frac{1}{2} || y - X \Theta ||_2^2  - {\omega'} \Theta + \frac{\epsilon}{2} \|\Theta\|_2^2.
\end{equation}  We use $M'$ to denote the transpose of a matrix $M$. 
The ridge term in the loss function, with a small $\epsilon$, is included to ensure the existence of an optimal solution; see \cite{harris2016selective}.  The estimated support set, $\h{E}= \text{Supp}(\h{\Theta})$, is treated as random, reflecting the fact that different samples of $y$ and $\omega$ will yield different active sets. Suppose we observe $\h{E}(y,\omega)=E$ on our specific data. We use $X_{E} \in \mathbb{R}^{n \times |E|}$ to represent the restriction of $X$ to the columns indexed by the set $E$. 

After observing the active set $E$, we assume a linear model of the form $y  \sim N(X_{E}\beta_{E}, \sigma^2 \rm{I_n})$, where $\sigma$ is fixed and $\rm{I_n}$ is the $n \times n$ identity matrix. This is a simple way to specify a model based on the active set, but selective inference affords us the flexibility to choose from a range of   specifications. For instance, the two-stage protocol we describe here can also be used for other linear models, such as those involving linear combinations of basis functions constructed from the active features. 

In the second stage, we aim to construct \mbox{$100(1-\alpha)\%$} confidence intervals for the best linear coefficients,
\begin{equation}
\label{best:linear:pred}
\beta_{E} = \argmin_{\beta \in \mathbb{R}^{|E|}} \  \mathbb{E} \left[ \|y - X_{E} \beta \|_2^2 \right], 
\end{equation}
under the selected model. 
We note that some recent post-selection work by \cite{hong2018overfitting, wang2020debiased, panigrahi2022treatment} has focused on overfitting bias from model selection.
However, the parameters in \eqref{best:linear:pred} are well-defined regardless of model misspecification, which could be either due to underfitting or overfitting bias.

A conditional likelihood for $\beta_{E}$ is obtained by conditioning upon the partition
$$ \mathcal{P}_{E}= \{y \in \mathbb{R}^n, \omega\in \mathbb{R}^p: \h{E}(y, \omega)=E\},$$ 
that contains all instances leading us to observe the estimated support set $E$. 
Letting $\rho\left(x ;\mu,\Omega\right)$ be the multivariate normal density function with mean vector $\mu$ and covariance $\Omega$, we can write this likelihood as 
\begin{equation}
    \begin{aligned}
    y \lvert \; \h{E}=E & \propto \dfrac{\rho\left(y;X_{E}\beta_{E}, \sigma^2\rm{I_n}\right)}{\bigintsss_{\mathcal{P}_{E}}\rho\left(\localvar{y};X_{E}\beta_{E}, \sigma^2\rm{I_n}\right) \cdot   \rho(\localvar{w}; 0, \Omega) d \localvar{w} d \localvar{y}} .
    \end{aligned}
    \label{cond:lik}
\end{equation}
The likelihood in \eqref{cond:lik} does not have a closed form since the normalizing constant is intractable.
Instead, a tractable version of the likelihood function, called the selection-adjusted likelihood, is obtained by conditioning on a proper subset of the actual selection event:
$$\left\{\h{S}(y, \omega)=S\right\} \subseteq \left\{\h{E}(y, \omega)=E\right\}.$$
The refined conditioning event, on the left-hand side of the previous display, can be characterized by the polyhedral partition
$$\{ y \in \mathbb{R}^n, \omega \in \mathbb{R}^p: A y + B\omega \leq c \}$$ for fixed matrices $A$, $B$ and $c$. 
For example, \cite{exact_lasso} identify a polyhedral partition by conditioning further on the active signs of the LASSO solution, alongside the estimated support set $E$.

Approximate inference is then obtained by centering interval estimates around the maximum likelihood estimate (MLE) of the selection-adjusted likelihood,  $\h{\beta}_{E}^{{\text{\; MLE}}}$, and using the observed Fisher information matrix, $\h{I}(\h{\beta}_{E}^{{{\text{\; MLE}}}})$, to estimate the variance.    
This yields post-selection confidence intervals for $\beta_{E}$ of the form 
$$\h{\beta}_{E,j}^{{\text{\; MLE}}} \pm z_{1-\alpha/2}  {\sqrt{ \h{I}_{jj}^{\; -1}(\h{\beta}_{E}^{{{\text{\; MLE}}}}) }}, \; j\in E;$$
where $v_j$ is the $j^{\text{th}}$ entry of the vector $v$, $M_{ij}$ is the $(i,j)^{\text{th}}$ element of a matrix $M$, and $z_{q}$ is the $q^{\text{th}}$ quantile of the standard normal distribution.

\subsection{An Objective Function with Shared Sparsity for MTL}
\label{sec:2.1}

We next set up the objective function in the multi-task setting.  Suppose we have $K$ regression tasks, with $K$ distinct response variables and a common set of $p$ predictors. For $k\in [K]$,  let  $n_k$ denote the sample size available for task $k$, $y^{(k)} \in \mathbb{R}^{n_k}$ denote the response vector for the $k^{\text{th}}$ task, and $X^{(k)}\in \mathbb{R}^{n_k \times p}$ denote the corresponding predictor matrix.   We assume that the predictors in each task have been centered and do not include intercept terms in the regression.
Consistent with the two-stage procedure described in Section \ref{sec:two_stage}, we introduce a randomization variable for each task.
Let $\omega^{(k)}\in  \mathbb{R}^{p}$ denote a Gaussian randomization variable such that (i) $\omega^{(k)} \sim \text{N} \left(0, \Omega^{(k)} \right)$ for $k\in [K]$; (ii) $\omega^{(k)}$ is independent of $\omega^{(k')}$ for all $k' \neq k$; and (iii) $\omega^{(k)}$ is independent of $y^{(k')}$ for  all $k'\in [K]$.

For each task, we assume that the set of non-zero coefficients is sparse and use penalized multi-task regression to identify the relevant features. We impose a specific inter-task structure on the joint regression by assuming that the coefficients
$\Theta^{(1)}, \dots, \Theta^{(K)} \in \mathbb{R}^p$
can be represented as the product of a common parameter that is shared between all tasks and a task-specific parameter that is unique to an individual task; namely, \begin{equation}
\label{decomposition:GL}
\Theta^{(k)}_{j} = \tau_j \ \gamma_j^{(k)}, \text{ for } j \in [p], k\in [K].
\end{equation}
A similar multiplicative parameterization has been used for joint estimation in a number of penalized MTL algorithms \citep{bi2008improved, lozano2012multilevel, wang2016multiplicative}.
To avoid a sign ambiguity, we take 
$\tau_j \geq 0  \text{ for } j \in [p].$
 We do not impose any further constraints to ensure that the two components, $\tau_j$ and $\gamma_j^{(k)},$ are each identifiable since our interest lies in estimating the sparsity structure of their product.   Note that $\tau_j$ determines the sparsity at the global level as $\tau_j =0$ implies that $\Theta^{(k)}_{j}=0$ for all $k\in [K]$.   The parameter  $ \gamma_j^{(k)}$ controls the task-specific sparsity, with  $\gamma_j^{(k)} = 0$ indicating that $\Theta^{(k)}_{j}=0$ for task $k$.

We proceed to fit a sparse model by minimizing the penalized objective function 
\begin{equation}
\label{canonical:MTL}
  \sum_{k=1}^K \mathcal{L} \big( \Theta^{(k)}; y^{(k)}, X^{(k)}, \omega^{(k)}\big) + \eta_1 \sum_{j=1}^p \tau_j + \eta_2 \sum_{j=1}^p \sum_{k=1}^K |\gamma_j^{\left( k \right)}|,
\end{equation}
subject to the constraint $\tau_j\geq 0 \text{ for } j\in [p].$
The minimizers $\h{\tau}\in \mathbb{R}^p$, $\h{\gamma}^{(k)}\in \mathbb{R}^p$ will clearly both be sparse due to the $\ell_1$ penalties in the objective function.   
To optimize the objective \eqref{canonical:MTL}, we note that its solution will yield coefficients $\Theta^{(k)}$ for  $k\in [K]$ that could be obtained from an equivalent formulation 
\begin{equation}
\label{reformulated}
 \argmin_{\{\Theta_j^{(k)}\}_{j\in [p], k\in [K]}} \sum_{k=1}^K \mathcal{L} \big( \Theta^{(k)}; y^{(k)}, X^{(k)}, \omega^{(k)}\big) + 2\lambda \sum_{j=1}^p \left\{\sum_{k=1}^K |\Theta_j^{(k)}|\right\}^{1/2}, 
\end{equation}
where $\lambda= \color{black} \sqrt{\eta_1 \eta_2}$.     This equivalence was established in previous work; see, e.g.,  \cite{guo2011joint} and \cite{lozano2012multilevel}. 

Following the approach of \cite{zou2008onestep} and \cite{guo2011joint}, we use an iterative local linear approximation to the penalty term in \eqref{reformulated}, centered at the absolute value of the previous iterate:  
$$ 2 \left\{\sum_{k=1}^K |\Theta_j^{(k)}|\right\}^{1/2} \sim c^{(t)} + \sum_{k=1}^K \dfrac{\color{black} |\Theta_j^{(k)}| \color{black}}{\sqrt{ \sum_{k=1}^K  \big| {\big(  \h{\Theta}_j^{\left( k \right) }} \big)^{\left( t \right)} \big|}}.$$
The constant $c^{(t)}$ depends only on the previous iterate and can be ignored in the corresponding optimization problem. With this approximation, the multi-task objective conveniently decouples by task. The successive estimates of the regression coefficients can be computed by solving a LASSO problem separately for each of the $K$ tasks, dependent on the previous iterate only through the penalty weights 
$$\lambda_j^{(t+1)} = \min \Big\{\lambda_0, \, \lambda \cdot \big( \sum_{k=1}^K  \big| {\big( \h{\Theta}_j^{\left( k \right)}} \big)^{\left( t \right)} \big|\big)^{-\frac{1}{2}} \Big\}, \;j\in [p] . $$
Here, $\lambda_0$ is a pre-specified large positive constant used for numerical stability, and $\lambda$ is a tuning parameter. The iterative procedure is summarized in Algorithm \ref{alg:Algselect}. 

\begin{algorithm}[H]
    \caption{Estimating Shared Sparsity}
  \label{alg:Algselect}
  \begin{algorithmic}[1]
    \vspace{0.05in}

             \For {k=1,\dots,K}
          \State Initialize $\big(\h{\Theta}^{( k)}\big)^{(0)} = \argmin_{\Theta^{(k)}} \left( \mathcal{L}\big(\Theta^{(k)}; y^{(k)}, X^{(k)}, \omega^{(k)}\big) + \lambda \displaystyle\sum_{j=1}^p | \Theta_j^{(k)}| \right)$
          \EndFor
    \Procedure {Iterate until convergence}{}
    \State Let $t=0$
    \While{convergence$<$tol}
      \For{$j = 1, \dots, p$}
         \State $\lambda_j^{(t+1)} = \min \Big\{\lambda_0, \, \lambda \cdot \big( \sum_{k=1}^K  \big| {\big( \h{\Theta}_j^{\left( k \right)}} \big)^{\left( t \right)} \big|\big)^{-\frac{1}{2}} \Big\}$
      \EndFor
       \For{$k = 1, \dots, K$}
        \State  \textbf{LASSO}:
        \State Solve $\big(\h{\Theta}^{( k)}\big)^{(t+1)} = \argmin_{\Theta^{(k)}} \left( \mathcal{L}\big(\Theta^{(k)}; y^{(k)}, X^{(k)}, \omega^{(k)}\big) +  \displaystyle\sum_{j=1}^p \lambda_j^{(t+1)} | \Theta_j^{(k)}| \right)$
      \EndFor
      \State $t=t+1$
      \EndWhile
    \EndProcedure
    \algstore{bkbreak}
     \end{algorithmic}
\end{algorithm}
\noindent  Upon convergence of Algorithm \ref{alg:Algselect}, we obtain 
\begin{equation}
\label{supp:MTL}
\h{E}_k =\text{Supp}(\h{\Theta}^{(k)}) = E_k, \text{ for } k\in [K].
\end{equation}
Let the cardinality of $E_k \subseteq [p]$ be equal to $\delta_k$, and let $\delta=\delta_1+ \cdots + \delta_K$.

\section{Maximum Likelihood Inference Post MTL}
\label{sec:MLE}

Next, we proceed to  specify a model with the sparsity structure estimated through \eqref{canonical:MTL} and  quantify uncertainty in the effects of selected predictors with respect to this model.    We restrict our search to linear models of the form 
\begin{equation}
\label{model:MTL}
y^{(k)}  =  X^{(k)}_{E_k}\beta^{(k)}_{E_k} + \varepsilon_k   \text{ for }  k\in [K],
\end{equation} 
where $\varepsilon_k$ is a vector of errors with length $n_k$, mean 0, and variance $\sigma^2_k$.  We assume that the errors are independent across samples and across tasks.  We fit this model by maximizing the corresponding normal likelihood, which can be viewed as a general $M$-estimation procedure.   For the inference step, the post-selection confidence intervals for $\beta^{(k)}_{E_k}$  will be based on the normal assumption for the error distribution.    


\subsection{Some preliminaries}
\label{sec:3.1}


We use boldface capital letters to denote stacked quantities across tasks. Let $$\Stacked{Y} = \left( {y^{(1)}}'\dots{y^{(K)}}' \right)',  \ \ \Stacked{E} =\{E_1,\dots,E_k\},  \ \ \StackedSymbol{\beta}_{\Stacked{E}} = \left( {\beta_{E_1}^{(1)}}'\dots{\beta_{E_k}^{(K)}}' \right)'.$$
Without loss of generality, we can reorder each predictor matrix and randomization instance to have the active components precede the inactive components:
$$X^{(k)} = \begin{bmatrix} X_{E_k}^{(k)} & X_{-E_k}^{(k)} \end{bmatrix}, \ \ \omega^{(k)} = \begin{pmatrix} \omega_{E_k}^{(k)} \\ \omega_{-E_k}^{(k)}\end{pmatrix} , $$
where $-A$ denotes the complement of set $A$. For each predictor $j \in [p]$, we use $\widetilde{j}_k$ to denote the index of this predictor in task $k$ after the permutation. Let $b^{(k)} \in \mathbb{R}^{\delta_k}$ denote the absolute values of the estimated non-zero multi-task regression coefficients for task $k\in [K]$ under this permutation, i.e.,
 $$b_{\widetilde{j}_k}^{(k)} = |\h{\Theta}_{j}^{(k)}| \hspace{1mm} \text{ whenever } |\widehat\Theta_{j}^{(k)}| \neq 0,$$
 and let
 $$\Stacked{B} =\left( {b^{(1)}}'\dots {b^{(K)}}' \right)'.$$

We let $s^{(k)} \in \mathbb{R}^{\delta_k}$ and $u^{(k)}\in \mathbb{R}^{p-\delta_k}$ represent the active and inactive components of the subgradient vector of the $\ell_1$-norm for the multi-task regression coefficients when evaluated at the solution, i.e.,
$$\left({s^{(k)}}' \,\  {u^{(k)}}' \right)' = \mathcal{D}_{\widehat\Theta} \Big\| \left( {\Theta_{E_k}^{(k)}}'
\ \ {\Theta_{-E_k}^{(k)}}' \right)^{\prime}\Big\|_1   \text{ for } k\in [K].$$ 
Note that the vector $s^{(k)}$ gives the signs of the active multi-task coefficients for the $k^{\text{th}}$ task, and $u^{(k)}$ satisfies $\|u^{(k)}\|_{\infty} \leq  1$. To reference the active and inactive components, respectively, of all the evaluated $\ell_1$-norm sub-gradients, we define $$\Stacked{S} = \left( {s^{(1)}}' \dots{s^{(K)}}' \right)^{\prime}, \Stacked{U} = \left( {u^{(1)}}' \dots{u^{(K)}}' \right)'.$$

We next introduce some notation to account for information shared between tasks. Suppose there are $r$ predictors, $j_1,\dots,j_r,$ that are active in one or more tasks. Define $\StackedSymbol{\Gamma} \in \mathbb{R}^r$ by
$$\StackedSymbol{\Gamma} = \left( \Gamma^{(j_1)} \dots \Gamma^{(j_r)}\right)',$$
where
$$\Gamma^{(j)} =\sum_{k=1}^K |\h{\Theta}_{j}^{(k)} | \;\; \text { for } j \in \{j_1,\dots,j_r \}.$$ 
Let the set of tasks where predictor $j$ is active be given by $$\kappa(j) = \{k: |\widehat\Theta_{j}^{(k)}| \neq 0 \},$$ with $d_j = |\kappa(j)|$ and elements $\kappa_1<\dots<\kappa_{d_j}$ arranged in increasing order.  For the active predictors $j_1,\dots,j_r$, we define $v^{(j)}$ to be a vector representing the first $(d_j-1)$ corresponding coefficients: 
$$v^{(j)} = \begin{pmatrix}|\widehat\Theta_{j}^{(\kappa_{1})}| & \dots & |\widehat\Theta_{j}^{(\kappa_{{d_j-1}})}|\end{pmatrix}' \;\; \text { for } j \in \{j_1, \dots, j_r \}.$$
The vector $v^{(j)}$ collects the absolute value of the non-zero coefficients for predictor $j$ across tasks, excluding the coefficient for the last task where predictor $j$ is non-zero. For most predictors in the active set, we expect that $d_j\geq2$ due to the shared sparsity across tasks; however, if we estimate $d_j=1$, then $v^{(j)}$ is empty and can be disregarded. After accounting for all but the last active coefficient corresponding to each predictor, we let 
$$\Stacked{V} = \left( {v^{(j_1)}}' \dots {v^{(j_r)}}'\right)'. $$

Note that there is a bijective mapping between $\Stacked{B}$ and $\left(\Stacked{V}, \StackedSymbol{\Gamma} \right)$. We introduce a matrix
$D \in \mathbb{R}^{r \times (\delta-r)}$ to record which elements of $\Stacked{V}$ correspond to each of the $r$ active predictors, with rows given by 
$$D_{i} = \begin{pmatrix} \mathbf{0}_{(d_{j_1}-1)}' & \dots & \mathbf{1}_{(d_{j_i}-1)}' & \dots & \mathbf{0}_{(d_{j_r}-1)}' \end{pmatrix} \;\; \text{ for } i\in[r].$$ 
For some permutation matrix $\mathcal{A} \in \mathbb{R}^{\delta \times \delta}$, the relationship between $\Stacked{B}$ and $\left(\Stacked{V},\StackedSymbol{\Gamma}\right)$ is given by
$$\Stacked{B} = \mathcal{A} \begin{pmatrix} \Stacked{V} \\  \StackedSymbol{\Gamma} - D \Stacked{V}  \end{pmatrix}.$$ 
Let the matrix $H \in \mathbb{R}^{\delta \times (\delta-r)}$ and the vector $g \in \mathbb{R}^{\delta}$ be given by: $$H = \begin{pmatrix} \rm{I}_{(\delta-r)} \\
 -D \end{pmatrix}\;\;\; g = \begin{pmatrix} \mathbf{0}_{(\delta-r)} \\ -\StackedSymbol{\Gamma} \end{pmatrix}.$$
To enforce the $\delta$ linear inequalities given by $Hv \geq g$, we use the following barrier function:
$$\phi_{H,g}(v) = \begin{cases} \sum_{j=1}^{\delta} \log \left(1 + \frac{1}{H_j v- g_j}\right) & \text{ if } Hv > g \\
\infty & \text{else.}\end{cases},$$
where $H_j$ is the $j^{th}$ row of $H$ and $g_j$ is the $j^{th}$ component of $g$.

\subsection{Estimating Equations for Approximate MLE-Based Inference}
\label{sec:3.2}

A natural starting point for selective inference in the multi-task setting is the law for $\Stacked{Y},$ conditioning on the event 
\begin{equation}
\label{Lee:event}
\left\{\h{\Stacked{E}} = \Stacked{E}, \;\h{\Stacked{S}}=\Stacked{S}  \right\}.
\end{equation}
This conditional prescription results in practical selective inference procedures for other $\ell_1$-regularized algorithms by constraining the response to fall within a polyhedral partition of the sample space. Consider, for example, the randomized LASSO procedure described in section \ref{sec:two_stage}. 
The event \eqref{Lee:event}, characterized through the K.K.T. conditions, induces an affine map from the randomization variable $\omega$ to the absolute coefficients $b$ and subgradient $u$ at the optimal solution. Under this transformation, the Jacobian contributes only a proportionality constant to the conditional law of $\left(y,b,u \right)$ given \eqref{Lee:event}, yielding a multivariate Gaussian distribution truncated to a polyhedral partition. The distribution for $y$ that results from conditioning upon $u$ and marginalizing over $b$ can be used to facilitate approximate MLE-based inference \citep{panigrahi2019approximate}. 

Unfortunately, this conditional prescription does not generalize well to the MTL setting.  
Note that the stationary map for the model selection procedure in Algorithm \ref{alg:Algselect}, given $\Stacked{E}$ and $\Stacked{S}$, induces a relationship between $\Stacked{W} = \left( {\omega^{(1)}}'\dots{\omega^{(K)}}' \right)'$, the collection of randomization variables, and $\left(\Stacked{B}, \Stacked{U}\right)$. This relationship implies a transformation of the form:
  $$\Stacked{W} = \begin{pmatrix} \pi^{(1)}(b^{(1)},u^{(1)}) & \pi^{(2)}(b^{(2)},u^{(2)})  &\cdots & \pi^{(K)}(b^{(K)},u^{(K)})\end{pmatrix},$$
  where
\begin{equation}
\label{KKT:k}
\begin{aligned}
\pi^{(k)}(b^{(k)},u^{(k)}) = -{X^{(k)}}^{\prime} y^{(k)} & + \begin{bmatrix} {({X^{(k)}_{E_k}})}^{\prime} X^{(k)}_{E_k} + \epsilon \cdot \rm{I}_{\delta_k} \\ {({X^{(k)}_{-E_k}})}^{\prime} X^{(k)}_{E_k} \end{bmatrix} \Dg(s^{(k)})b^{(k)}  \\ &\;\;\;\;\;\;\;\;\;\;\;\;\;\;\;\;\;\;\;\;\;\;\;\;\;\;\;\;\;\;\;\;\;\;\;\;\;\;\; + \Dg(\Lambda^{(k)}) \begin{pmatrix}  s^{(k)} \\ u^{(k)} \end{pmatrix},
\end{aligned}
\end{equation}
$$\Lambda_{\widetilde{j}_k}^{(k)} = \min \Big\{\lambda_0, \, \lambda \cdot \big( \sum_{k=1}^K  \big| {\big( \h{\Theta}_j^{\left( k \right)}} \big|\big)^{-\frac{1}{2}} \Big\}, \;j\in [p].$$
Observe that this transformation is non-affine since the penalty term is now related to the solution. The change-of-variables Jacobian, given in Proposition 1 of the Appendix, is a complicated function of $(\Stacked{B},\Stacked{U})$.    Deriving estimating equations for maximum-likelihood inference using the law of $(\Stacked{Y},\Stacked{B}, \Stacked{U})$ conditional upon \eqref{Lee:event} would require closed-form expressions for the partial derivatives of the Jacobian, which are not available.
For completeness sake, we provide the likelihood based on this conditional law in Proposition 2 under the Appendix.

Instead, we propose a different approach that can  bypass the intractable Jacobian and avoid cumbersome numerical integrations to easily facilitate maximum likelihood inference.   We will work with an exact selection-adjusted likelihood, derived by conditioning on the refined event
\begin{equation}
\label{new:conditioning:event}
\left\{\h{\Stacked{E}} = \Stacked{E},\ \h{\Stacked{S}}= \Stacked{S},\ \h{\StackedSymbol{\Gamma}}=\StackedSymbol{\Gamma}, \  \h{\Stacked{U}}=\Stacked{U}\right\},
\end{equation}
which is a proper subset of the event in \eqref{Lee:event}.
We form our estimating equations for maximum likelihood inference in terms of the least squares estimator based on the selected predictors for each task, 
\begin{equation}
    \label{mle:naive}
\h{\beta}_{E_k}^{(k)}= {(X^{(k)}_{E_k})}^{\dagger}y^{(k)}.
\end{equation}
Note that the estimator in \eqref{mle:naive} is the naive MLE that we would have used if the sets $E_k$ were specified before looking at the data.
Dependent on ${X^{(k)}}' y^{(k)}$, the event of selection also relies on  
$$\h{\beta}_{\perp}^{(k)} =  {(X^{(k)})}^{\prime}\left({\rm{I}_{n_k}}- X^{(k)}_{E_k}{(X^{(k)}_{E_k})}^{\dagger}\right) y^{(k)},$$
the ancillary statistic we obtain through a projection of the response onto the subspace orthogonal to the span of the selected predictors $X^{(k)}_{E_k}$. 

Lemma \ref{ref:conditioning:event} first identifies an equivalent representation for the refined conditioning event in terms of $\Stacked{V}$, $\StackedSymbol{\Gamma}$ and $\Stacked{U}$ that we observe after solving the MTL Algorithm \ref{alg:Algselect};  please see Section \ref{sec:3.1} for a complete list of definitions. 

\begin{lemma} 
\label{ref:conditioning:event}
Suppose $\Stacked{V}$, $\StackedSymbol{\Gamma}$ are defined as above. Then the event \eqref{new:conditioning:event} is equivalent to the event
$$ \left\{\Stacked{V}>\mathbf{0}, \  \h{\StackedSymbol{\Gamma}}=\StackedSymbol{\Gamma}, \  \StackedSymbol{\Gamma} - D\Stacked{V} > \mathbf{0} , \ \h{\Stacked{U}}=\Stacked{U}\right\} .$$
\end{lemma}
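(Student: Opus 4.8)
The plan is to verify the two set inclusions after first peeling off the conditions that appear verbatim on both sides. The constraints $\h{\StackedSymbol{\Gamma}}=\StackedSymbol{\Gamma}$ and $\h{\Stacked{U}}=\Stacked{U}$ are common to \eqref{new:conditioning:event} and to the target event, so it suffices to work on the cell where the estimated support and active signs equal the observed labels $\Stacked{E},\Stacked{S}$ (so that $r$, the $d_{j_i}$, the incidence matrix $D$, the permutation $\mathcal{A}$, and all the stacked dimensions are fixed), and show there that $\{\h{\Stacked{E}}=\Stacked{E},\,\h{\Stacked{S}}=\Stacked{S}\}$ is equivalent to $\{\Stacked{V}>\mathbf{0},\ \StackedSymbol{\Gamma}-D\Stacked{V}>\mathbf{0}\}$. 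I would organize this around two facts: (a) a purely algebraic equivalence between positivity of $\Stacked{B}$ and the two displayed inequalities, and (b) the identification of $\{\h{\Stacked{E}}=\Stacked{E},\,\h{\Stacked{S}}=\Stacked{S}\}$ with $\{\Stacked{B}>\mathbf{0}\}$ via the stationarity characterization of the solution produced by Algorithm \ref{alg:Algselect}.

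For (a), recall the bijection between $\Stacked{B}$ and $(\Stacked{V},\StackedSymbol{\Gamma})$ noted above, together with the explicit relation $\Stacked{B} = \mathcal{A}\left(\Stacked{V}',\,(\StackedSymbol{\Gamma}-D\Stacked{V})'\right)'$. Unpacking the rows of $D$ shows that $(\StackedSymbol{\Gamma}-D\Stacked{V})_i = \Gamma^{(j_i)}-\sum_{l=1}^{d_{j_i}-1}|\h{\Theta}_{j_i}^{(\kappa_l)}| = |\h{\Theta}_{j_i}^{(\kappa_{d_{j_i}})}|$, i.e.\ the absolute value of the coefficient of predictor $j_i$ in the \emph{last} task in which it is active; hence the stacked vector $\left(\Stacked{V}',\,(\StackedSymbol{\Gamma}-D\Stacked{V})'\right)'$ merely lists all $q$ active absolute coefficients $|\h{\Theta}_j^{(k)}|$ in a different order than $\Stacked{B}$ does, which is exactly why $\mathcal{A}$ is a permutation matrix. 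Since a permutation matrix preserves componentwise positivity, $\Stacked{B}>\mathbf{0}$ if and only if $\Stacked{V}>\mathbf{0}$ and $\StackedSymbol{\Gamma}-D\Stacked{V}>\mathbf{0}$; and on the event $\h{\StackedSymbol{\Gamma}}=\StackedSymbol{\Gamma}$ the fixed vector $\StackedSymbol{\Gamma}$ may be replaced by $\h{\StackedSymbol{\Gamma}}$ throughout, so this is a statement purely about the realized solution.

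For (b), on $\{\h{\Stacked{E}}=\Stacked{E},\,\h{\Stacked{S}}=\Stacked{S}\}$ the support of $\h{\Theta}^{(k)}$ is exactly $E_k$, so every entry of $\Stacked{B}$ is a strictly positive absolute value, giving $\Stacked{B}>\mathbf{0}$. For the converse I would invoke the stationarity map \eqref{KKT:k}: with the labels $\Stacked{E},\Stacked{S}$ and the inactive subgradient $\Stacked{U}$ fixed, this map puts $\Stacked{W}$ in one-to-one correspondence with $\Stacked{B}$ (equivalently, with $(\Stacked{V},\StackedSymbol{\Gamma})$), and the pair it produces satisfies the K.K.T.\ conditions of the converged per-task LASSO problems, with the weights $\Lambda^{(k)}$ rebuilt from $\Stacked{B}$; because each per-task subproblem is strictly convex (owing to the $\tfrac{\epsilon}{2}\|\cdot\|_2^2$ term in \eqref{Randomized Loss}), these conditions are sufficient for optimality and pin down the solution uniquely. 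If in addition $\Stacked{B}>\mathbf{0}$, then no active coordinate vanishes, so the support cannot shrink below $\Stacked{E}$, while complementary slackness keeps every inactive coordinate at zero, so the support is exactly $\Stacked{E}$ with active signs exactly $\Stacked{S}$. Combining (a) and (b) and reinstating the common conditions $\h{\StackedSymbol{\Gamma}}=\StackedSymbol{\Gamma}$, $\h{\Stacked{U}}=\Stacked{U}$ yields the claimed equivalence.

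I expect the genuinely delicate step to be the converse in (b): one must be certain that positivity of $\Stacked{B}$ inside the labeled cell recovers the active set with nothing extra and nothing missing, which rests on the stationarity/K.K.T.\ characterization of the fixed point of Algorithm \ref{alg:Algselect} and on the invertibility of $\Stacked{W}\mapsto(\Stacked{B},\Stacked{U})$ given the data. By contrast, the algebraic step (a) is routine bookkeeping with the permutation $\mathcal{A}$ and the incidence matrix $D$, and the forward direction of (b) is essentially definitional.
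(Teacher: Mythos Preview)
Your proposal is correct and follows the same route as the paper: peel off the common conditions $\h{\StackedSymbol{\Gamma}}=\StackedSymbol{\Gamma}$ and $\h{\Stacked{U}}=\Stacked{U}$, then reduce $\{\h{\Stacked{E}}=\Stacked{E},\,\h{\Stacked{S}}=\Stacked{S}\}$ to $\{\Stacked{B}>\mathbf{0}\}$ and rewrite the latter via the permutation $\mathcal{A}$ as $\{\Stacked{V}>\mathbf{0},\ \StackedSymbol{\Gamma}-D\Stacked{V}>\mathbf{0}\}$. The paper's own proof is a one-liner that records only your step~(a) and silently borrows step~(b) from the K.K.T.\ characterization stated around \eqref{KKT:k} and \eqref{Lee:event:characterization}; your more careful justification of the converse in~(b) via strict convexity and complementary slackness is sound but goes beyond what the paper actually argues.
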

The proof can be found in the Appendix.  Consider the bijective mapping $\Psi_{\mathcal{A}}: \mathbb{R}^{K \cdot p}\to \mathbb{R}^{K \cdot p}$ such that
$$ 
\begin{pmatrix} \Stacked{B}' & \Stacked{U}' \end{pmatrix}'=  \Psi_{\mathcal{A}}\begin{pmatrix} \Stacked{V} & \StackedSymbol{\Gamma} & \Stacked{U} \end{pmatrix} = \Dg \left(\mathcal{A}, \rm{I}_{(K p \ - \ \delta)}\right)\begin{pmatrix} \Stacked{V}' & \left(\StackedSymbol{\Gamma} - D \Stacked{V}\right)^{\prime} & \Stacked{U}^{\prime} \end{pmatrix}'.
$$
Applying a change of variables via the composite mapping,
\begin{equation}
\label{CoV}
\Stacked{W} \stackrel{\left(\Pi_{\Stacked{X}' \Stacked{Y}}\circ\Psi_A\right)^{-1}}{\xrightarrow{\hspace*{1.25cm}}}  \begin{pmatrix}\Stacked{V} &\StackedSymbol{\Gamma} & \Stacked{U} \end{pmatrix},
\end{equation}
we obtain an exact selection-adjusted likelihood in Theorem \ref{final:conditional:law} after conditioning on the event in \eqref{new:conditioning:event}.
  The alternate characterization for the refined conditioning event in terms of the new variables $\Stacked{V}$, $\StackedSymbol{\Gamma}$ and $\Stacked{U}$ yields a selection-adjusted likelihood function that no longer involves the Jacobian, the term which previously hindered our attempts to solve the estimating equations. The normalizing constant for our refined conditioning event is simply a Gaussian integral over a support set that is characterized by exactly $\delta$ linear inequalities.

\begin{theorem}
\label{final:conditional:law}
Consider the model in \eqref{model:MTL}. The likelihood obtained from the law of the least squares estimates based on 
$\left\{X^{(k)}_{E_k},y^{(k)}\right\}_{k=1}^{K}$ after conditioning upon the event in Lemma \ref{ref:conditioning:event} is given by
\begin{equation*}
\begin{aligned}
& \left(\int \rho\left(\localvar{\beta}; L\StackedSymbol{\beta}_{\Stacked{E}} +m, \Sigma \; \right) \cdot \rho( \localvar{V} ; P\localvar{\beta} + q, \Delta \;) \cdot 1(H\localvar{V}\geq g) \; d\localvar{V} d\localvar{\beta} \right)^{-1} \cdot \rho\left(\h{\StackedSymbol{\beta}}_{\Stacked{E}} ; L\StackedSymbol{\beta}_{\Stacked{E}} + m, \Sigma \; \right).
\end{aligned}
\end{equation*}
\end{theorem}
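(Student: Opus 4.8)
The plan is to obtain the conditional law of the stacked least squares vector $\h{\StackedSymbol{\beta}}_{\Stacked{E}}$ by changing variables from the randomization $\Stacked{W}$ to the optimization quantities $(\Stacked{V},\StackedSymbol{\Gamma},\Stacked{U})$, and then conditioning the resulting joint law on $\StackedSymbol{\Gamma}$, $\Stacked{U}$ and the stacked ancillary statistic $\h{\StackedSymbol{\beta}}_{\perp}$. First I would record the joint density of $\big(\h{\StackedSymbol{\beta}}_{\Stacked{E}},\ \h{\StackedSymbol{\beta}}_{\perp},\ \Stacked{W}\big)$: under \eqref{model:MTL} the per-task pieces $\h{\beta}^{(k)}_{E_k}$ and $\h{\beta}^{(k)}_{\perp}$ are independent Gaussians (orthogonal linear images of $y^{(k)}$, the latter mean zero and hence ancillary for $\StackedSymbol{\beta}_{\Stacked{E}}$), $\h{\StackedSymbol{\beta}}_{\Stacked{E}}$ is $N(\StackedSymbol{\beta}_{\Stacked{E}},\cdot)$ with a block-diagonal covariance, and $\Stacked{W}\sim N(0,\StackedSymbol{\Omega})$ is independent of the data; so the joint density is a product of Gaussian densities. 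By Lemma~\ref{ref:conditioning:event} the conditioning event equals $\{H\Stacked{V}\geq g,\ \h{\StackedSymbol{\Gamma}}=\StackedSymbol{\Gamma},\ \h{\Stacked{U}}=\Stacked{U}\}$ on the branch fixed by $\Stacked{E}$ and $\Stacked{S}$, and on that branch the stationarity conditions \eqref{KKT:k} composed with $\Psi_{\mathcal{A}}$ realize $\Stacked{W}$ as the map \eqref{CoV}, $\Stacked{W}=(\Pi_{\Stacked{X}' \Stacked{Y}}\circ\Psi_{\mathcal{A}})(\Stacked{V},\StackedSymbol{\Gamma},\Stacked{U})$, a bijection of $\mathbb{R}^{Kp}$ (invertibility of the relevant blocks following from the strictly convex ridge term in $\mathcal{L}$).

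Next I would carry out the change of variables $\Stacked{W}\mapsto(\Stacked{V},\StackedSymbol{\Gamma},\Stacked{U})$ in the joint density, producing the Jacobian factor $\big|\det\,\partial\Stacked{W}/\partial(\Stacked{V},\StackedSymbol{\Gamma},\Stacked{U})\big|$. The crucial point---and the reason the refinement \eqref{new:conditioning:event} is the event one should condition on---is that this Jacobian, evaluated on the slice $\{\h{\StackedSymbol{\Gamma}}=\StackedSymbol{\Gamma},\ \h{\Stacked{U}}=\Stacked{U}\}$, is a constant. Once $\StackedSymbol{\Gamma}$ is fixed the penalty weights $\Lambda^{(k)}=\Lambda^{(k)}(\StackedSymbol{\Gamma})$ are fixed numbers, and inspecting \eqref{KKT:k} shows every block of the Jacobian is then free of $\Stacked{V}$: the terms carrying $\Stacked{B}$ (hence $\Stacked{V}$) are affine with constant Gram-type coefficients, while the terms in which $\Dg(\Lambda^{(k)})$ multiplies the subgradient $\big({s^{(k)}}'\ {u^{(k)}}'\big)'$, together with their $\StackedSymbol{\Gamma}$-derivatives, involve only the fixed $s^{(k)}$, $\Stacked{U}$ and $\StackedSymbol{\Gamma}$. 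Since the data enters \eqref{KKT:k} only through the additive term $-{X^{(k)}}'y^{(k)}$, the Jacobian also does not depend on $\h{\StackedSymbol{\beta}}_{\Stacked{E}}$. Hence it is a pure constant and gets absorbed into the normalizing constant---this is exactly what eliminates the intractable change-of-variables factor of the naive conditioning (Proposition~\ref{prop:jacobian}).

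With the Jacobian out of the way, I would condition the joint density of $(\h{\StackedSymbol{\beta}}_{\Stacked{E}},\h{\StackedSymbol{\beta}}_{\perp},\Stacked{V},\StackedSymbol{\Gamma},\Stacked{U})$ on $\big(\h{\StackedSymbol{\beta}}_{\perp},\StackedSymbol{\Gamma},\Stacked{U}\big)$ at their observed values and impose the indicator $1(H\Stacked{V}\geq g)$. On the slice, $\Stacked{W}$ is affine in $(\Stacked{V},\h{\StackedSymbol{\beta}}_{\Stacked{E}})$ (using ${X^{(k)}}'y^{(k)}={X^{(k)}}'X^{(k)}_{E_k}\,\h{\beta}^{(k)}_{E_k}+\h{\beta}^{(k)}_{\perp}$), so $\rho(\Stacked{W};0,\StackedSymbol{\Omega})$ is a jointly Gaussian kernel in $(\Stacked{V},\h{\StackedSymbol{\beta}}_{\Stacked{E}})$. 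Completing the square in $\Stacked{V}$ factors this kernel as $\rho\big(\Stacked{V};P\h{\StackedSymbol{\beta}}_{\Stacked{E}}+q,\Delta\big)$ times a residual Gaussian factor in $\h{\StackedSymbol{\beta}}_{\Stacked{E}}$, where $P,q,\Delta$ carry no dependence on the parameter $\StackedSymbol{\beta}_{\Stacked{E}}$ because $\Stacked{W}$ is pure noise. Multiplying the residual factor by the model density $\rho(\h{\StackedSymbol{\beta}}_{\Stacked{E}};\StackedSymbol{\beta}_{\Stacked{E}},\cdot)$ and combining the two Gaussians in $\h{\StackedSymbol{\beta}}_{\Stacked{E}}$ yields $\rho\big(\h{\StackedSymbol{\beta}}_{\Stacked{E}};L\StackedSymbol{\beta}_{\Stacked{E}}+m,\Sigma\big)$ up to a factor free of $\StackedSymbol{\beta}_{\Stacked{E}}$. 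Normalizing the resulting joint density of $(\h{\StackedSymbol{\beta}}_{\Stacked{E}},\Stacked{V})$ over $\{H\Stacked{V}\geq g\}$ produces the stated double integral in the denominator; evaluated at the observed $(\h{\StackedSymbol{\beta}}_{\Stacked{E}},\Stacked{V})$, the factors $\rho(\Stacked{V};P\h{\StackedSymbol{\beta}}_{\Stacked{E}}+q,\Delta)$ and $1(H\Stacked{V}\geq g)=1$ are constant in $\StackedSymbol{\beta}_{\Stacked{E}}$ and drop out, leaving precisely the likelihood displayed.

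I expect the Jacobian argument of the second step to be the main obstacle: one must verify that it is the conditioning on $\StackedSymbol{\Gamma}$---not merely on $\beta$-type statistics---that renders every block of $\partial\Stacked{W}/\partial(\Stacked{V},\StackedSymbol{\Gamma},\Stacked{U})$ independent of $\Stacked{V}$ on the slice, despite the nonlinear dependence of the weights $\Lambda^{(k)}$ on $\StackedSymbol{\Gamma}$. A secondary, purely bookkeeping concern is tracking which quantities are constant in $\StackedSymbol{\beta}_{\Stacked{E}}$ versus genuine functions of it, so that the cancellations in the final normalization step are legitimate.
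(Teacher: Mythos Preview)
Your proposal is correct and follows essentially the same route as the paper's proof: write the product-Gaussian joint law of $(\h{\StackedSymbol{\beta}}_{\Stacked{E}},\h{\StackedSymbol{\beta}}_{\perp},\Stacked{W})$, apply the change of variables \eqref{CoV}, condition on $(\h{\StackedSymbol{\beta}}_{\perp},\StackedSymbol{\Gamma},\Stacked{U})$ and the truncation $\{H\Stacked{V}\geq g\}$, and then complete the square to exhibit the $\rho(\cdot;L\StackedSymbol{\beta}_{\Stacked{E}}+m,\Sigma)\cdot\rho(\cdot;P\localvar{\beta}+q,\Delta)$ factorization. Your treatment of the Jacobian is in fact more explicit than the paper's---the paper records $|J_{(\Pi_{\Stacked{X}'\Stacked{Y}}\circ\Psi_{\mathcal{A}})}|$ in the transformed joint law and then silently drops it upon conditioning, whereas you spell out why every block of $\partial\Stacked{W}/\partial(\Stacked{V},\StackedSymbol{\Gamma},\Stacked{U})$ is free of $\Stacked{V}$ (and of $\h{\StackedSymbol{\beta}}_{\Stacked{E}}$) once $\StackedSymbol{\Gamma}$ fixes the weights $\Lambda^{(k)}$; this is the right justification and matches the paper's affine representation $\mathcal{C}_1\h{\StackedSymbol{\beta}}_{\Stacked{E}}+\mathcal{C}_2\Stacked{V}+f(\StackedSymbol{\Gamma},\Stacked{U};\h{\StackedSymbol{\beta}}_{\perp})$. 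The only cosmetic difference in the last step is that the paper marginalizes over $\Stacked{V}$ and discards the resulting integral $\int_{\{H\localvar{V}\geq g\}}\rho(\localvar{V};P\h{\StackedSymbol{\beta}}_{\Stacked{E}}+q,\Delta)\,d\localvar{V}$ as a $\StackedSymbol{\beta}_{\Stacked{E}}$-free constant, while you discard the integrand itself evaluated at the observed $\Stacked{V}$; both are valid ways to arrive at the stated likelihood up to proportionality in $\StackedSymbol{\beta}_{\Stacked{E}}$.
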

Expressions for the matrices $L$, $m$, $\Sigma$, $P$, $q$, and $\Delta$ are provided in the Appendix. 
To develop an easily solvable system of estimating equations for the MLE and the inverse observed Fisher information matrix, $\InvFI$, 
we bypass the integration in the normalizer, simply approximating it with the mode of the integrand in the selection region.
That is, 
\begin{equation}
\begin{aligned}
& \log \int \rho\left(\localvar{\beta}; L\StackedSymbol{\beta}_{\Stacked{E}} + m, \Sigma \; \right) \cdot \rho(\localvar{V}; P\localvar{\beta} + q, \Delta) \cdot 1(H\localvar{V}\geq g) \; d\localvar{V} d\localvar{\beta}\\
&\approx -\inf_{\localvar{\beta}, \localvar{V} } \; \Big\{\dfrac{1}{2}(\; \localvar{\beta} - L\StackedSymbol{\beta}_{\Stacked{E}} -m)'\Sigma^{-1}(\; \localvar{\beta} - L\StackedSymbol{\beta}_{\Stacked{E}} -m) \\
&\;\;\;\;\;\;\;\;\;\;\;\;\;\;+ \dfrac{1}{2}(\localvar{V}- P \localvar{\beta} - q)'\Delta^{-1}(\localvar{V} - P \localvar{\beta} - q)+  \phi_{H,g}(\localvar{V}) \Big\}
\label{approx}
\end{aligned}
\end{equation}
ignoring an additive constant.
The approximation in \eqref{approx} then lends itself towards tractable equations for the selective MLE, $\Mb_{\Stacked{E}}$, and the inverse observed Fisher information matrix, $\InvFI$, given in Theorem \ref{est:equations:mle}. 
 
\begin{theorem}
\label{est:equations:mle}
Under the modeling assumptions in Theorem \ref{final:conditional:law}, the approximate selective MLE and the observed information matrix satisfy the following system of estimating equations.
$$\Mb_{\Stacked{E}} = L^{-1}\h{\StackedSymbol{\beta}}_{\Stacked{E}} + L^{-1}\Sigma P' \Delta^{-1}(P\h{\StackedSymbol{\beta}}_{\Stacked{E}} + q - \h{\Stacked{V}}) -L^{-1}m,$$
$$\InvFI=L^{-1} \Sigma L'^{ -1}  + L^{-1} \Sigma\left(P' \Delta^{-1}P - P' \Delta^{-1} \left(\Delta^{-1} + \grad^2 \phi_{H,g}(\h{\Stacked{V}}) \right)^{-1}\Delta^{-1}P \right)  \Sigma L'^{-1},$$
where $\h{\Stacked{V}}$ is obtained from solving 
\begin{equation}
\label{eq:opt}
\h{\Stacked{V}} = \underset{\localvar{V}}{\text{argmin}} \; \frac{1}{2}(\localvar{V}-P\h{\StackedSymbol{\beta}}_{\Stacked{E}} - q)^{\prime}\Delta^{-1} (\localvar{V}-P\h{\StackedSymbol{\beta}}_{\Stacked{E}} - q) + \phi_{H,g}(\localvar{V}).
\end{equation}
\end{theorem}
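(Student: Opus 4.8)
The plan is to substitute the Laplace-type approximation \eqref{approx} for the log‑normalizer into the exact likelihood of Theorem \ref{final:conditional:law}, reparametrize so that the unknown parameter enters affinely, and then read off the stationarity conditions of the resulting smooth, strictly concave log‑likelihood. Writing $\theta = L\StackedSymbol{\beta}_{\Stacked{E}} + m$ and discarding additive constants, the approximate log‑likelihood is
\[
\ell(\theta) = -\tfrac12(\h{\StackedSymbol{\beta}}_{\Stacked{E}} - \theta)'\Sigma^{-1}(\h{\StackedSymbol{\beta}}_{\Stacked{E}} - \theta) + g(\theta), \qquad g(\theta) := \inf_{\localvar{\beta},\localvar{V}} F(\localvar{\beta},\localvar{V};\theta),
\]
where $F(\localvar{\beta},\localvar{V};\theta)$ is exactly the objective on the right‑hand side of \eqref{approx}. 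Since $\Sigma^{-1},\Delta^{-1}\succ 0$ and $\phi_{H,g}$ is convex (indeed smooth and strictly convex on the open polyhedron $\{Hv>g\}$, and $+\infty$ outside), $F(\cdot,\cdot;\theta)$ is strictly convex and coercive, so the inner minimizer $(\localvar{\beta}(\theta),\localvar{V}(\theta))$ exists, is unique, lies in the interior of the feasible set, and is characterized by
\[
\Sigma^{-1}(\localvar{\beta}-\theta) - P'\Delta^{-1}(\localvar{V}-P\localvar{\beta}-q) = 0, \qquad \Delta^{-1}(\localvar{V}-P\localvar{\beta}-q) + \grad\phi_{H,g}(\localvar{V}) = 0.
\]

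For the MLE, I would invoke the envelope (Danskin) theorem to get $\grad g(\theta) = \Sigma^{-1}(\theta - \localvar{\beta}(\theta))$, so that the score of $\ell$, transported back to $\StackedSymbol{\beta}_{\Stacked{E}}$ by the chain rule, collapses to $L'\Sigma^{-1}\big(\h{\StackedSymbol{\beta}}_{\Stacked{E}} - \localvar{\beta}(\theta)\big) = 0$; because $L$ is invertible and $\Sigma\succ 0$, the stationary point $\theta^{\star}$ is the unique value at which the inner minimizer equals the observed $\h{\StackedSymbol{\beta}}_{\Stacked{E}}$. Substituting $\localvar{\beta} = \h{\StackedSymbol{\beta}}_{\Stacked{E}}$ in the second stationarity equation reproduces verbatim the optimality condition of \eqref{eq:opt}, so $\localvar{V}(\theta^{\star}) = \h{\Stacked{V}}$; plugging both into the first stationarity equation and solving for $\theta^{\star}$, then using $\StackedSymbol{\beta}_{\Stacked{E}} = L^{-1}(\theta - m)$, yields the claimed formula for $\Mb_{\Stacked{E}}$.

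For the observed information I would differentiate the score once more: $\InvFI = \big(-\grad^2_{\StackedSymbol{\beta}_{\Stacked{E}}}\ell\big)^{-1} = L^{-1}\big(\Sigma^{-1} - \grad^2 g(\theta^{\star})\big)^{-1}L'^{-1}$, and from $\grad g(\theta) = \Sigma^{-1}(\theta - \localvar{\beta}(\theta))$ one gets $\grad^2 g(\theta) = \Sigma^{-1}(I - \partial_\theta\localvar{\beta}(\theta))$, hence $\Sigma^{-1} - \grad^2 g = \Sigma^{-1}\partial_\theta\localvar{\beta}$ and $\InvFI = L^{-1}(\partial_\theta\localvar{\beta})^{-1}\Sigma L'^{-1}$. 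It then remains to compute $\partial_\theta\localvar{\beta}$ by implicitly differentiating the two stationarity equations in $\theta$ and eliminating $\partial_\theta\localvar{V}$ via the second equation, giving
\[
(\partial_\theta\localvar{\beta})^{-1} = I + \Sigma\Big(P'\Delta^{-1}P - P'\Delta^{-1}\big(\Delta^{-1} + \grad^2\phi_{H,g}(\h{\Stacked{V}})\big)^{-1}\Delta^{-1}P\Big),
\]
evaluated where $\localvar{V} = \h{\Stacked{V}}$; substituting this into $\InvFI = L^{-1}(\partial_\theta\localvar{\beta})^{-1}\Sigma L'^{-1} = L^{-1}\Sigma L'^{-1} + L^{-1}\Sigma(\cdots)\Sigma L'^{-1}$ produces the stated expression.

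The main obstacle is regularity bookkeeping rather than any deep idea: one must verify that the approximated normalizer $g$ is twice continuously differentiable with the inner minimizer staying in the interior of $\{Hv>g\}$, so that $\grad^2\phi_{H,g}(\h{\Stacked{V}})$ is well defined; that Danskin's theorem legitimately differentiates through the inner $\inf$; and that the implicit function theorem applies for $\partial_\theta\localvar{\beta}$ (the relevant Hessian blocks $\Sigma^{-1} + P'\Delta^{-1}P$ and $\Delta^{-1} + \grad^2\phi_{H,g}(\h{\Stacked{V}})$ being nonsingular). One should also record the second‑order check that $\Sigma^{-1} - \grad^2 g(\theta^{\star}) = \Sigma^{-1}\partial_\theta\localvar{\beta} \succ 0$, so that $\theta^{\star}$ is a genuine maximizer and $\InvFI$ is a valid covariance surrogate; this follows from the strict convexity and coercivity of $F$, which force $\partial_\theta\localvar{\beta}$ to be positive definite.
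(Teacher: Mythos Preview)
Your proposal is correct and follows essentially the same route as the paper's proof: both reparametrize affinely (you via $\theta = L\StackedSymbol{\beta}_{\Stacked{E}} + m$, the paper via $\zeta = \Sigma^{-1}\theta$), differentiate through the inner minimization, and identify the MLE by observing that stationarity forces the inner minimizer $\localvar{\beta}(\theta^\star) = \h{\StackedSymbol{\beta}}_{\Stacked{E}}$, after which the first inner KKT equation delivers $\theta^\star$ and the second reduces to \eqref{eq:opt}. The only difference is vocabulary: where the paper packages the argument through the convex-conjugate identity $(\nabla F^*)^{-1} = \nabla F$ and then implicitly differentiates $\beta^*(\zeta)$ to obtain $\nabla^2 F^*$, you state the same two steps as Danskin's theorem for $\nabla g$ and the implicit function theorem for $\partial_\theta\localvar{\beta}$; the block elimination of $\partial_\theta\localvar{V}$ and the final expression for $\InvFI$ coincide line for line.
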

With these estimators for the approximate MLE and the inverse observed Fisher information matrix, it is now possible to use maximum likelihood inference to form $100\cdot(1-\alpha)\%$ confidence intervals for the parameters within the MTL model \eqref{model:MTL}. Algorithm \ref{alg:Alginference} summarizes  our procedure for post-MTL inference by reusing the same data.  

\begin{algorithm}[H]
  \caption{Multi-Task Model Selection and Inference}
  \label{alg:Alginference}
  \begin{algorithmic}[1]
  \algrestore{bkbreak}
  \State Record the values for $\h{\Stacked{E}},\h{\Stacked{S}}, \h{\StackedSymbol{\Gamma}}, \text{ and } \h{\Stacked{U}}$ at convergence of Algorithm \ref{alg:Algselect}
  \State From the estimated sparsity structure, compute $\h{\beta}_{E_k}^{(k)} \text{ and } \h{\beta}^{(k)}_{\perp}$ for $k\in [K]$ 
  \State Specify a significance level $\alpha$
  \State Compute $P$, $q$, and $\Delta$
  \State Optimize \eqref{eq:opt} with gradient descent to compute $\h{\Stacked{V}}$
  \State Compute the matrices $L$, $m$, $\Sigma$
\Procedure {Maximum Likelihood Inference}{}
\State Find $\Mb_{\Stacked{E}}$ and  $\InvFI$ based on the estimating equations in Theorem \ref{est:equations:mle}
\For {$j \in \{1, \dots, \delta\}$}
\State Compute interval $\Mb_{{\Stacked{E}},j} \pm z_{1-\alpha/2}  {\sqrt{ \h{\boldsymbol{\mathsf{I}}}_{jj}^{\; -1}(\h{\StackedSymbol{\beta}}_{\Stacked{E}}^{{{\text{\; MLE}}}}) }}$
\EndFor
\EndProcedure
     \end{algorithmic}
\end{algorithm}

\section{Analysis of Neurocognitive Data from the ABCD Study}
\label{sec:ABCD}
\smallskip
The Adolescent Brain Cognitive Development (ABCD) study, discussed in the introduction, is a large longitudinal study undertaken to characterize typical cognitive development in adolescence \citep{luciana2018adolescent, volkow2018conception, karcher2021abcd}. Researchers are interested in better understanding the organization of cognitive abilities during childhood development, i.e., whether there is a single, general ability or separable fluid and crystallized abilities. Existing work has relied almost exclusively on behavioral data and has reached equivocal findings. Some studies have found evidence for a single strong, general factor of cognitive ability in youth \citep{juan2000testing, gignac2014dynamic}. Other studies have concluded that a two-factor model of intelligence performs better than a single-factor model for children and adolescents, with fluid intelligence and crystallized intelligence becoming more differentiated with age \citep{hulur2011intelligence, simpson2020neurocognitive}.  

Studies that leverage neurological data in studying the organization of cognitive abilities are extremely rare. Two exceptions are \cite{tadayon2020differential} and \cite{simpson2020neurocognitive}, who examined cortical morphology and white matter, respectively. These structural modalities, however, tend to have much weaker associations with cognitive abilities than resting-state fMRI \citep{marek2022reproducible, chen2022shared}. In addition, these studies used traditional mass univariate approaches, treating all edge weights as a bag of features.  We instead adopt a multi-task approach that, as we have argued, is better suited for the dual goals of prediction and cartographic mapping.  

\subsection{Multi-Task Framework for Studying ABCD Data}
\label{sec:5.1}

Applied to the ABCD data, our multi-task learning and selective inference procedure, MTL + SI, offers a novel approach to identifying the neurobiological underpinnings of fluid and crystallized intelligence in the developing brain.  
We have applied MTL + SI to the second ABCD release (the study is ongoing), using the same inclusion/exclusion criteria as \cite{sripada2021brain-wide}. This leaves data for 5,937 subjects from 19 different research sites throughout the US. We limit the analysis to four of the tasks from the ABCD neurocognitive battery, two measuring fluid intelligence (the List Sorting Working Memory Test  from the NIH Toolbox for the Assessment of Neurological and Behavioral Function and the Matrix Reasoning subtest from the Wechsler Intelligence Test for Children)  and two measuring crystallized intelligence (the Picture Vocabulary Test and the Reading Comprehension Test from the same NIH Toolbox).  

The predictors in our multi-task regression are neurological factors extracted from resting-state fMRI data by estimating the connections between 418 regions of interest (ROIs, or nodes) in the brain. These ROIs were identified based on the Gordon cortical parcellation \citep{gordon2016generation}, augmented with additional subcortical and cerebellar atlases. The 418 ROIs are further classified into 15 functional groups, ranging in size from 4 to 54 nodes; see Table \ref{tab:Networks} for a complete list. These groups of ROIs, equivalent to communities in the statistical network analysis literature, are themselves called networks in neuroimaging, an unfortunate terminology overload.

To assess the strength of connection between each pair of ROIs, the Pearson correlation coefficient is computed between the fMRI blood oxygen level dependent (BOLD) signals at those ROIs. We use the  inter-node correlations from \cite{sripada2021brain-wide}, computed after pre-processing the fMRI time series data to correct for nuisance covariates, such as physiological noise and head motion, through a standard pipeline that includes FreeSurfer normalization, ICA-AROMA denoising, CompCor correction, and omission of high-motion frames. With 418 nodes, each scan corresponds to 418 $\times$ 417 / 2 = 87,153 features. A standard approach to resting-state fMRI data is to replace the edge weights with top principal component scores \citep{cordes2006estimation}, which both reduces dimensionality and helps with the low signal-to-noise ratio.  We retain the first 500 principal component scores computed from the correlations to use as features in multi-task learning.   Previous work has shown that a small number of components is typically sufficient to capture most inter-individual variation in functional connectivity \citep{sripada2019basic} and predict differences in cognition \citep{sripada2020prediction}.

After standardizing the response variables, we consider two different ways of identifying the principal components for each task through multi-task learning. One approach, which we refer to as joint, is to apply Algorithm \ref{alg:Algselect} to all four tasks. This is the default approach to multi-task learning on both the simulated and real data when not otherwise specified. The alternative approach, which we refer to as pairwise, is to apply Algorithm \ref{alg:Algselect} separately to the two tasks that measure fluid intelligence and the two tasks that measure crystallized intelligence.  Comparing the two approaches allows us to assess how much joint learning of the fluid and crystallized tasks can improve detection of shared neurological structure. We proceed to fit a multi-task model on the selected principal components for each approach and construct confidence intervals for the best linear coefficients using Algorithm \ref{alg:Alginference}. A consistent plug-in estimator is used to approximate the noise level, following the recommendation of \cite{randomized_response}.  We use 80\% of the original data for model selection and inference, hold out another 10\% for selecting the tuning parameter for penalized multi-task regression, and use the last 10\% as test data.

\subsection{Validation of MTL+SI Through Simulation}
\label{sec:5.2}

We first use synthetic data with the same dimensions and estimated sparsity level as the fMRI features to investigate the efficacy of MTL + SI in recovering signals and estimating their strength. We generate data from the linear regression model in \eqref{model:MTL} with $K=4$ and noise variance $1$.
Each predictor matrix is simulated by drawing $6000$ samples from a multivariate Gaussian distribution of dimension $p=500$, with all means equal to zero and covariance given by the identity matrix.
Simulation results with non-Gaussian errors are reported in Section 3 of the Appendix. 

The coefficients $\StackedSymbol{\beta}\in\mathbb{R}^{p\times K}$ in the MTL model are chosen based on two parameters, the global sparsity level $s_G$ and the task-specific sparsity level $s_T$, both numbers between 0 and 1. We define the global sparsity as the percentage of predictors that are zero for all of the tasks. The task sparsity, meanwhile, quantifies the average number of tasks that do not share any one of the global signals. Note that a task-level sparsity value $s_T = 0$ would mean that all tasks have the same active predictors, and higher levels of $s_T$ indicate that the predictors used by each task are more heterogeneous. For given levels of $s_G$ and $s_T$, the entries of $\boldsymbol{\beta}$ are determined as follows. 

\begin{enumerate}
\item Select $[s_G p]$ predictors to be globally null for every task, and set the corresponding rows of $\boldsymbol{\beta}$ to zero.  
\item  For the remaining set of globally active predictors,  set an average of $[s_T K]$ entries of each row of $\boldsymbol{\beta}$ to zero.  
\item For each predictor $j\in[p],$ specify the corresponding non-zero coefficients by sampling without replacement from an equally-spaced sequence of values covering the interval $$\big[\sqrt{2\log(p)},\sqrt{6\log(p)}\big].$$ Randomly assign each coefficient a sign drawn from $\{-1,1\}$. 
\end{enumerate}

The task sparsity level is of primary scientific interest for its ability to measure the extent of feature-sharing between tasks, so we investigate the consequences of varying $s_T$ while fixing $s_G$. Our results from applying MTL+SI jointly to the four ABCD tasks, presented in the following sections, indicate that 77 of the top 500 principal components are significant for one or more tasks following selective inference, so we fix $s_G=0.85$. We estimate that the task sparsity level of the real data is $s_T=0.416$, but we consider three different task sparsity levels for our simulations: 0.25, 0.375, and 0.5. The simulations with a task sparsity level of 0.25 and 0.5 are designed so that each predictor in the globally active set is shared by three tasks and two tasks, respectively. For the simulations with a task sparsity level of 0.375, the globally active predictors are specified so that half are shared by three tasks and the remaining half are shared by two tasks. We conduct 100 simulation repetitions under each of the three data-generating models, corresponding to the three different sparsity levels. The tuning parameter is chosen to minimize the average MSE on the validation set across iterations, and the results are reported at that tuning parameter value for all 100 iterates. 

To assess both multi-task learning and selective inference when there is shared structure across tasks, we compare MTL + SI against two alternative approaches, MTL with data splitting and single-task LASSO with selective inference. Each procedure is described below.
\begin{enumerate}
\item \textbf{MTL($v$) + SI}:  Our proposed approach. Use Algorithm \ref{alg:Algselect} to select an MTL model based on independent Gaussian randomization variables, $\omega^{(k)} \sim N(0, v^2\sigma^2\cdot \rm{I})$ for $k\in [K]$, and construct selective inference (SI) confidence intervals by Algorithm \ref{alg:Alginference}.
\item \textbf{Data Splitting (DS($s$))}: Divide the training data into two parts, using $[s n_k]$ samples from each task $k \in [K]$ for model selection with the usual multi-task algorithm, and reserve the rest for inference about the selected predictors as described in \cite{cox}.
\item \textbf{LASSO($v$) + SI}:  Apply the randomized LASSO separately to each task using independent randomization variables, $\omega^{(k)} \sim N\left(0, v^2\sigma^2 \cdot \rm{I}\right) \text{ for } k\in [K]$, and proceed with SI using the maximum likelihood approach of \cite{panigrahi2019approximate}. 
\end{enumerate}

We consider two common choices for the data splitting parameter, $s=0.5$ and $s=0.67$. \cite{selective_bayesian} show that there is a rough equivalence between data splitting and a randomization variable with variation parameter $ v = \sqrt{(1-s)/s}$ for a Gaussian response. We find that the variation parameter $v=0.7$, corresponding to $s=0.67$, strikes a slightly more optimal balance between the quality of model selection and the quality of inference than the variation parameter $v=1.0$, corresponding to $s=0.5$. Thus, we set $v=0.7$ for all selective inference methods.

We use several metrics to evaluate the performance of each method. The empirical coverage rate (CR) of the confidence intervals is computed for the non-zero coefficients, defined by 
$$
\text{CR}= 1 - \frac{\lvert \{ j \in \h{\Stacked{E}}: \StackedSymbol{\beta}_{{\Stacked{E}}, j} \not\in \bm{C}_{\h{{\Stacked{E}}},j}\}\rvert} {\max(|\h{\Stacked{E}}|,1)}. $$
This rate is further averaged over replications. We assess the inferential power of each method by reporting the confidence interval lengths for the selected parameters. To measure the overall accuracy of model selection and subsequent inference, we compute the F1 score,  defined as 
$$
\text{F1} = 2  \dfrac{\text{precision} \times \text{recall} }{\text{precision} + \text{recall}}.
 $$
For our purposes, precision is the proportion of truly active predictors among those that were both selected into the model and deemed significant, that is, the effects with confidence intervals that did not cover zero. Recall is the proportion of all truly active predictors that were both selected into the model and deemed significant after inference. Letting $\Stacked{E}_0$ be the set of true active predictors, 
$$\text{Precision }=\frac{\lvert \Stacked{E}_0 \; \cap \; \{ j \in \h{\Stacked{E}}: 0 \not\in \bm{C}_{\h{{\Stacked{E}}},j} \} \rvert}{\lvert \{ j \in \h{\Stacked{E}}: 0 \not\in \bm{C}_{\h{{\Stacked{E}}},j}\} \rvert} \, ;  \;
\text{Recall }=\frac{\lvert \Stacked{E}_0 \; \cap \; \{ j \in \h{\Stacked{E}}: 0 \not\in \bm{C}_{\h{{\Stacked{E}}},j}\} \rvert}{\lvert \Stacked{E}_0 \rvert}. $$

Figure \ref{fig:vary_task_sparsity} shows the distribution of coverage, interval length, and F1 score for MTL(0.7) + SI and the alternative methods. We observe that all methods achieve a nominal coverage level of 90\%. In terms of interval length, MTL(0.7) + SI has a large advantage over DS(0.67), which asymptotically reserves a similar amount of information for inference. MTL(0.7) + SI also has similar or better performance than DS(0.5) in terms of interval length, even though the latter method asymptotically reserves more information for inference. Of the metrics we report, the F1 score provides the most direct comparison between the approaches, capturing both the validity of the chosen model and the significance of the results. We find that MTL(0.7) + SI achieves a higher median F1 score than all three alternatives for each task sparsity level. This indicates that similar tasks should be trained together whenever possible and that selective inference is a more optimal method for quantifying the uncertainty in models chosen through joint learning than sample splitting. Additional simulations which vary other parameters in the design---dimensions and global sparsity---are collected in the Appendix; please see Section 3. 

\begin{figure}[h!]
  \centering
    \includegraphics[width=\linewidth]{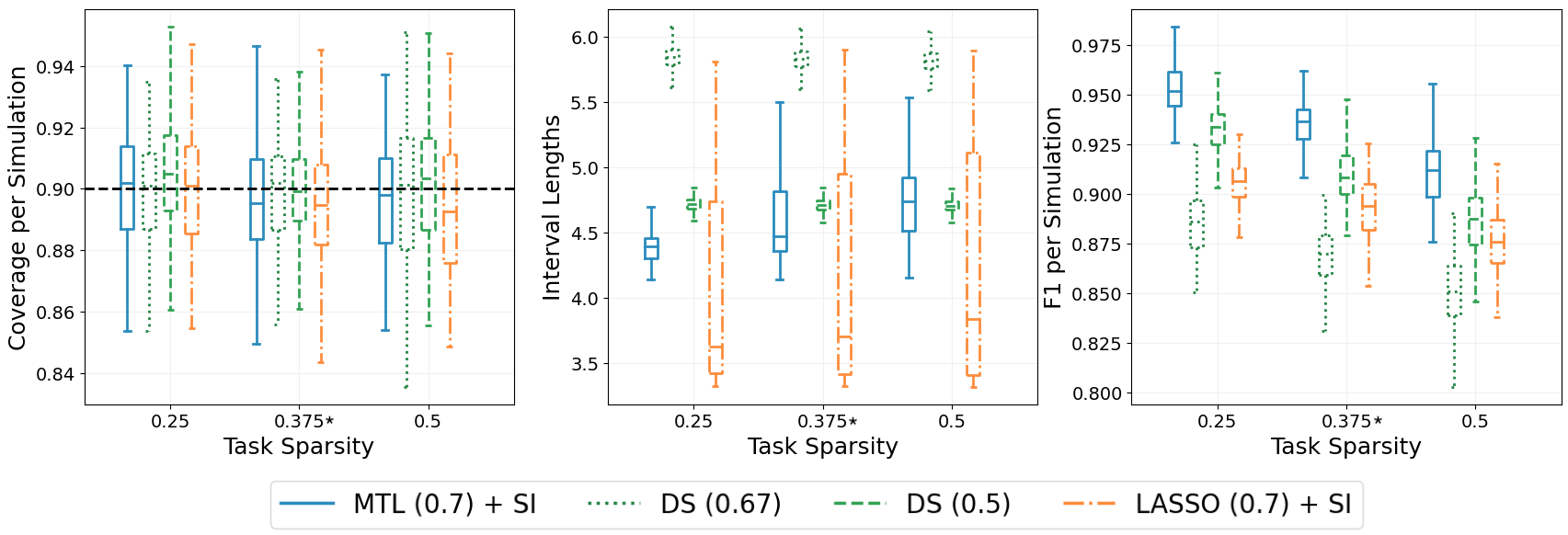}
  \caption{Comparison of MTL(0.7) + SI against three alternative approaches: DS(0.67), DS(0.5), and LASSO(0.7) + SI, with 100 replications for each level of task sparsity $s_T$. Global sparsity is fixed at $s_G = 0.85$. The asterisk (*) denotes the task sparsity level closest to the fMRI data. Outlier points are not shown to improve readability. Left: coverage; Middle: interval length; Right: F1 score.}
  \label{fig:vary_task_sparsity}
\end{figure}

Finally, we compare the results of applying MTL(0.7) + SI jointly to four simulated tasks against the results of applying MTL(0.7) + SI separately to two pairs of simulated tasks when a) there is shared structure across all tasks and b) when the shared structure is only present within the two pairs. To test setup (a), we randomly assign each active predictor to two tasks, ensuring that any two of the four tasks have roughly the same number of common predictors. To test setup (b), we instead specify the active predictors so that the two related tasks share a common set of predictors and the unrelated pairs have no common predictors. Note that $s_T=0.5$ in both setups, with only the relationship between tasks changing. Figure \ref{fig:joint_vs_separate_sim} indicates that the joint approach produces shorter intervals and a higher median F1 score under setup (a), while the joint and pairwise approaches have a similar median confidence interval length and F1 score under setup (b). These results confirm that the joint approach improves the quality of model selection and inference when there is common structure across tasks, without doing any real harm when there is no shared structure.

\begin{figure}[h!]
  \centering
    \includegraphics[width=\linewidth]{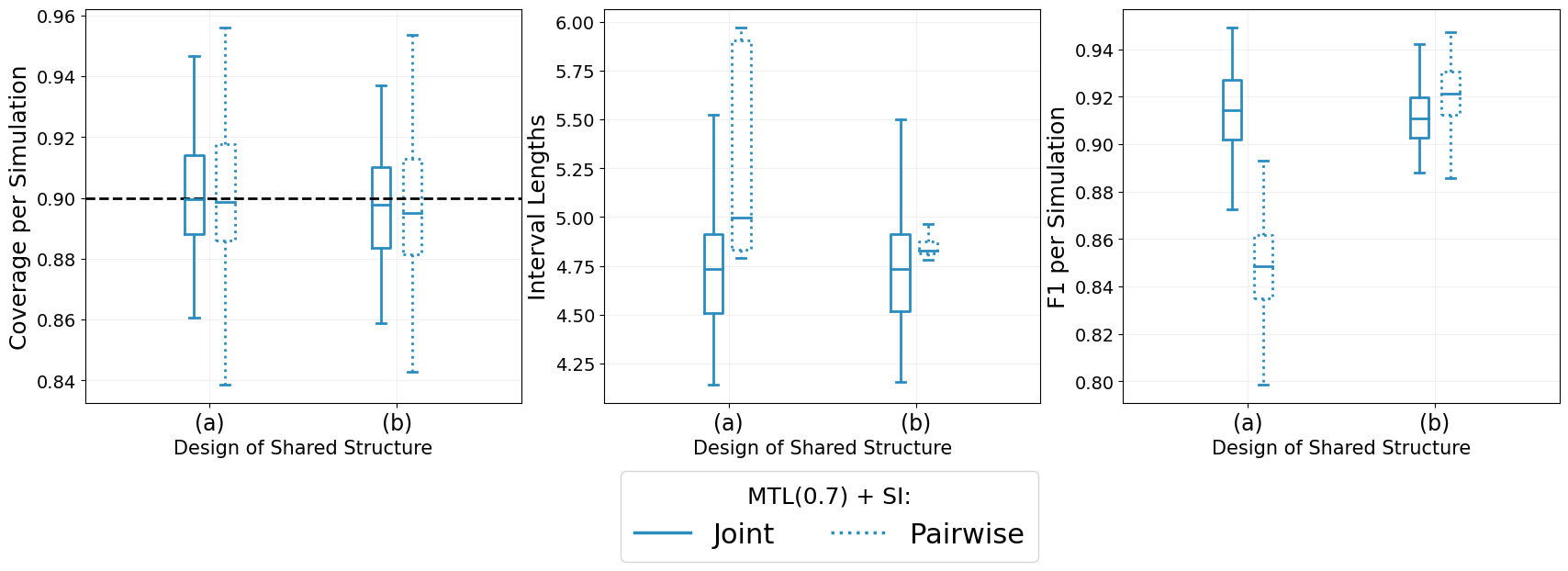}
  \caption{Comparison of the joint and pairwise approaches under setup (a), where any two tasks have roughly the same amount of common structure, and setup (b), where only the two pairs have common structure. Global and task sparsity are fixed at $s_G = 0.85$ and $s_T=0.5$. Outliers are not shown. Left: coverage; Middle: interval length; Right: F1 score.}
  \label{fig:joint_vs_separate_sim}
\end{figure}

\subsection{Assessing the Neurological Organization of Different Cognitive Abilities}
\label{sec:5.3}

We now apply our methods to the ABCD data, again comparing the joint and pairwise approaches to MTL(0.7) + SI. The left panel of Figure \ref{fig:Joint vs Separate} shows the predictive performance of each method on validation data. Following the convention in the neuroimaging literature \citep{sripada2020prediction, sripada2021brain-wide}, we measure the predictive performance of each method by the so-called predictive $r$, the correlation between predictions and observed responses on out-of-sample data. All tuning parameters are chosen to maximize the average predictive $r$ across tasks on the validation data. Table \ref{Tab:2} reports the final performance of each method on the testing data. The joint approach generally maintains some advantage across all tasks, and the observed correlations are consistent with the expectations of domain experts and previous findings \citep{sripada2021brain-wide}. 

The benefit of training all four tasks together is even more apparent when comparing the inferential power of the two methods. The right panel of Figure \ref{fig:Joint vs Separate} shows the confidence interval lengths obtained under each approach. We observe that applying MTL(0.7) + SI jointly to all four tasks results in a substantially smaller median confidence interval length than applying MTL(0.7) + SI separately to the two crystallized and the two fluid intelligence tasks. Overall, the results indicate that the joint approach can better detect some neurological features while also matching or exceeding the predictive performance of the two separate multi-task models for fluid and crystallized intelligence. 

\begin{table}[ht]
\centering
\begin{tabular}{lcc}
  \hline \hline
 & Joint Predictive $r$  & Pairwise Predictive $r$ \\ 
  \hline
{\text{NIH Tlbx RC}} & 0.335 & 0.321 \\
{\text{NIH Tlbx PV}} & 0.470 & 0.451 \\
{\text{Matrix Reasoning}} & 0.283 & 0.273 \\
{\text{NIH Tlbx LS}} &0.354 & 0.317 \\
   \hline
\end{tabular}
\caption{Predictive correlations computed on the testing data for the joint and pairwise approaches, where MTL(0.7) + SI is applied, respectively, to all four tasks or separately to the pairs of fluid/crystallized tasks.}
\label{Tab:2}
\end{table}

\begin{figure}[h]
    \centering
    \includegraphics[width=\linewidth]{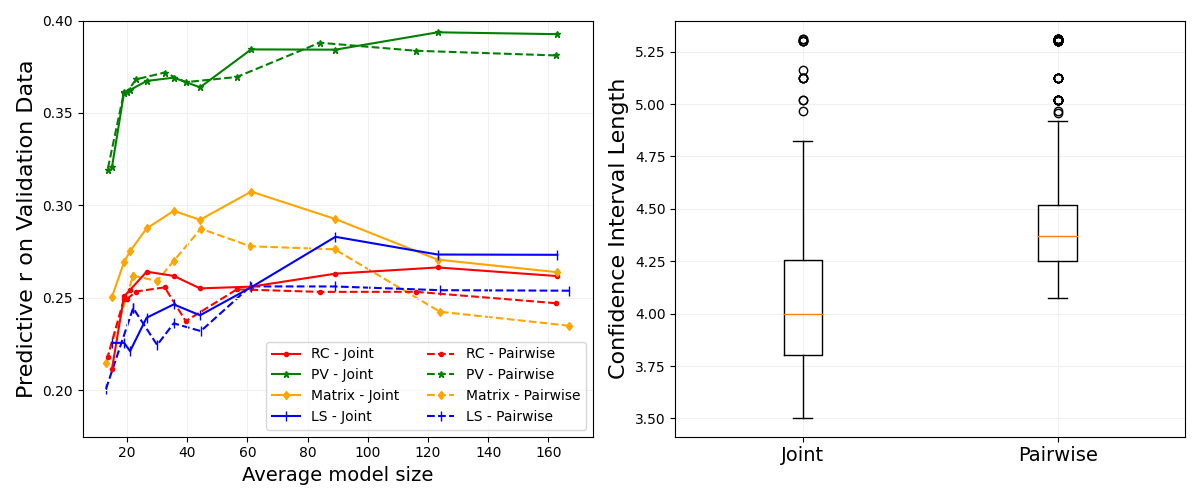}
  \caption{Left: The predictive $r$ for the joint and pairwise approaches to MTL(0.7) + SI. Right: The distribution of confidence interval lengths across tasks for the joint and pairwise approaches to MTL(0.7) + SI.}
  \label{fig:Joint vs Separate}
\end{figure}

Although the predictive advantage of the joint approach is slight, the chosen features are quite different from those recovered through the pairwise approach. Training all of the tasks together yields four very similar models, indicating that many of the same neurological features could underlie different cognitive abilities. To quantify the structural overlap between any two tasks, we use the Jaccard index to measure the similarity of the significant features for those tasks, where significant features are the PCs with confidence intervals that do not contain zero. Figure \ref{fig:Jaccard} shows the Jaccard similarity of the four tasks under both the joint and the pairwise approaches to model selection and inference. When applied separately to the two pairs of tasks, MTL(0.7) + SI reveals a significant amount of shared structure between the two crystallized intelligence tasks and the two fluid intelligence tasks, as well as a lesser but still notable amount of shared structure across the two pairs of tasks. The joint approach to model selection and inference, however, reveals a significant amount of common structure between any two of the four tasks. Selective inference seems to confirm the existence of the shared signals identified through joint model selection, suggesting that there may be a common neurological basis for different cognitive abilities that can be best understood through joint multi-task learning.

\begin{figure}[h]
    \centering
    \includegraphics[width=\linewidth]{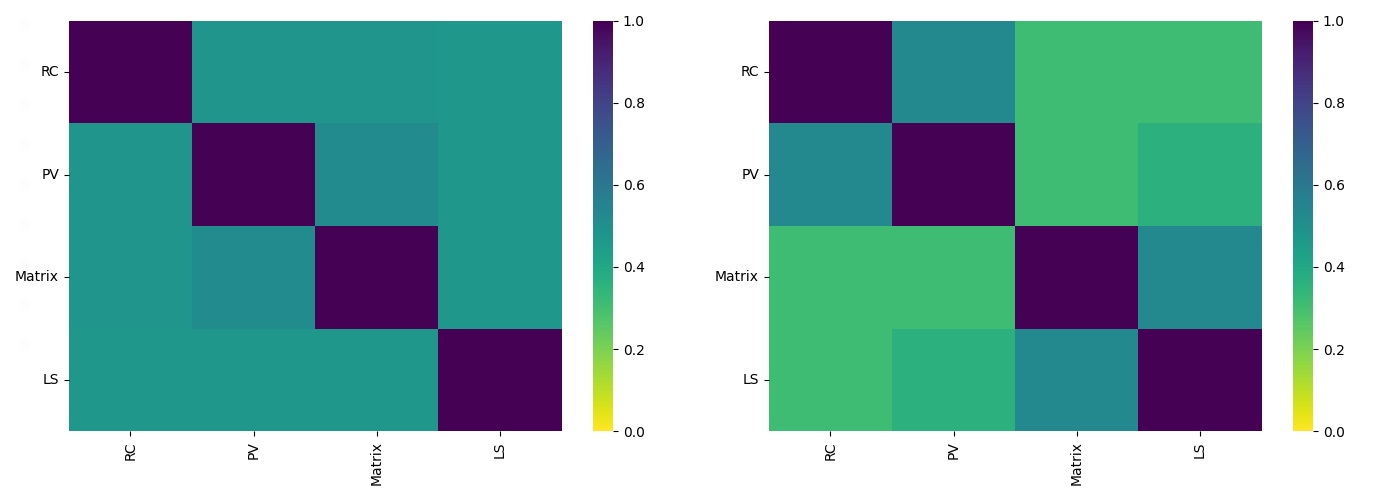}
  \caption{Similarity of selected models for the four tasks. Left: Jaccard index between sets of significant PCs recovered through the joint approach.  Right: Jaccard index between sets of significant PCs recovered through the pairwise approach.}
  \label{fig:Jaccard}
\end{figure}

We have also compared MTL(0.7) + SI to data splitting, applying DS(0.67) and DS(0.5) jointly to all four tasks. MTL(0.7) + SI seems to improve model selection relative to data splitting, with more of the selected features surviving inference. As shown in Figure  \ref{fig:inference_results}, MTL(0.7) + SI also yields a shorter median confidence interval length than data splitting, both overall and for predictors that are selected by both MTL(0.7) + SI and either DS(0.67) or DS(0.5). Note that some of the predictors vary between approaches since each method performs its own model selection; however, rough comparisons may still be possible when there is substantial overlap in the selected sets. Table \ref{tab:avg_results} reports the average number of features selected and further deemed significant across the four tasks for each method, as well as the average number of shared features between selective inference and each data splitting procedure. The overlap is significant, making the comparison more meaningful.

\begin{table}[ht]
\centering
\begin{tabular}{lccc}
  \hline \hline
 & \# selected & \# significant \\ 
  \hline
{\text{MTL (0.7) + SI}} & 89.25 & 45.00 \\
\hline
{\text{DS (0.67)}} & 114.25 & 39.75 \\
{\text{Common}} & 69.50 & 29.25 \\
\hline
{\text{DS (0.5)}} & 78.75 & 35.50 \\
{\text{Common}} &  53.25 &  30.50 \\
   \hline  \hline
\end{tabular}
\caption{Comparison of selected models, in average number of predictors per task}
\label{tab:avg_results}
\end{table}

\begin{figure}[h]
    \centering
    \includegraphics[width=\linewidth]{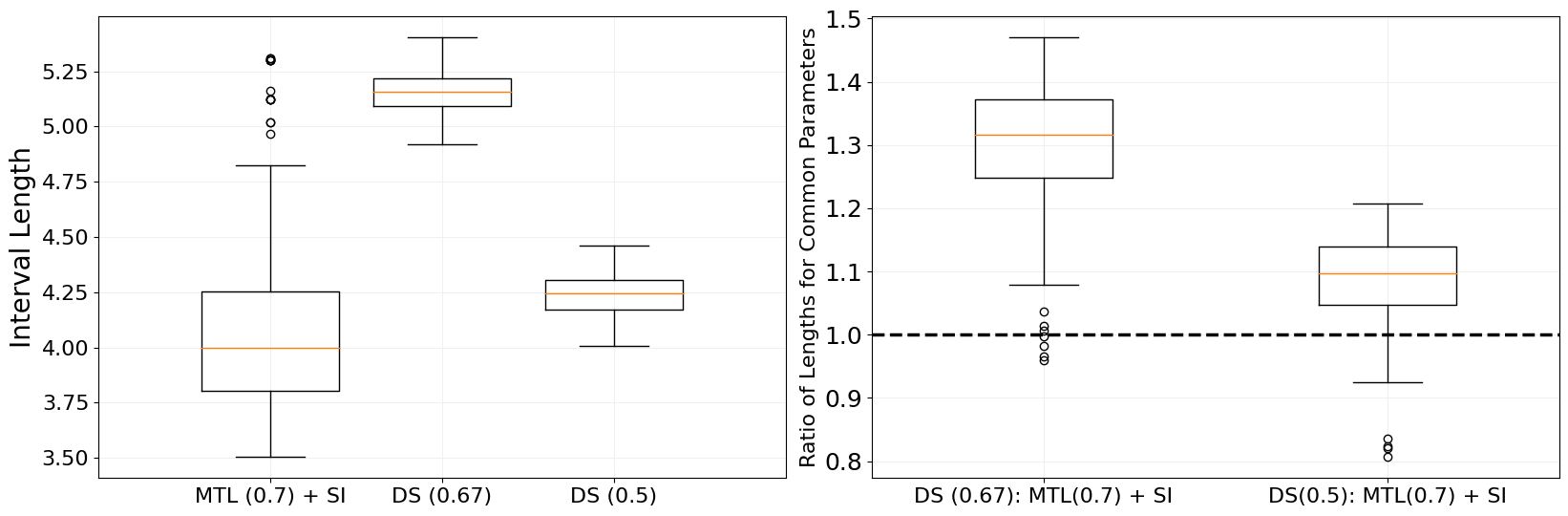}
  \caption{Comparison of inferential power between MTL(0.7) + SI, DS(0.67), and DS(0.5) when applied jointly to all four tasks.  Left: Boxplots of interval lengths.  Right: Ratio of interval lengths for common parameters.}
  \label{fig:inference_results}
\end{figure}

\subsection{Neurological Interpretation of Selected Models}
\label{sec:5.5}

We now offer an interpretation of the shared structure identified through multi-task learning and validated through selective inference. Since experts are ultimately interested in understanding the effects of the connectome, we aim to translate the results back into the original 87,153-dimensional connectome feature space. Note that the mean structure for task $k$ in the original feature space of dimension $p$ can be represented as $$X \beta^{(k)} = X M M'  \beta^{(k)} = Z \theta^{(k)},$$ where $M$ is the orthogonal matrix of eigenvectors for $X'X$, $Z$ is the matrix of principal component scores, and $\theta^{(k)} = M'  \beta^{(k)} \in \mathbb{R}^p$. To recover $\beta^{(k)},$ a potential plug-in estimator is $$\hat{\beta}^{(k)}=M \begin{pmatrix} \hat{\theta}_{E_k}^{{(k)}} \\ 0 \end{pmatrix} =M_{E_k}\hat{\theta}_{E_k}^{(k)},$$ where $\hat{\theta}_{E_k}^{(k)} \in \mathbb{R}^{\delta_k}$ is the MLE for task $k$ obtained from Algorithm \ref{alg:Alginference}.  Although this estimator is only consistent for $\beta^{(k)}$ when $\delta_k=p$, we will use it to approximate $\beta^{(k)}$ as is typically done in ordinary principal components regression \citep{jolliffe2003principal}.

Figure \ref{fig:mtl-original} shows the results from applying MTL(0.7) + SI jointly to all four tasks when projected back into the original feature space. There is notable similarity of connectivity patterns across all of the four tasks, providing some support for the general factor model. Key connectivity motifs include stronger positive and negative connections within the default mode network (DMN) and the visual network, as well as increased positive connectivity within the cerebellum. We also observe complex patterns of connectivity changes between regions/networks, especially between the frontoparietal network (FPN) and DMN; the auditory network and the somatomotor network; the cingulo-parietal network (CP) and the retrosplenial network (RSP); and RSP and DMN. 

Our results show some agreement with a recent study that identified FPN, DMN, the dorsal attention network, and the visual network as influential in predicting general intelligence \citep{tong2022transdiagnostic}. FPN is involved in flexible adaptive control \citep{cole2013multi}, and a number of previous studies implicate it in executive functions and cognitive control \citep{cole2007cognitive, niendam2012meta}, constructs closely related to the general factor of intelligence \citep{chen2019testing}. DMN is involved in spontaneous thought \citep{andrews2014default} and semantic/conceptual representation \citep{wirth2011semantic,binder2011neurobiology,binder2009semantic}, capacities that likely facilitate abstraction and problem-solving. Cerebellum has been traditionally associated with coordination of movement \citep{spencer2005role,bastian2006learning}, but there is growing recognition that it coordinates both external motor operations as well as internal mental operations, and thus it plays a critical role in supporting complex cognition \citep{schmahmann2019cerebellum, schmahmann1996movement, andreasen1999defining}. CP and RSP are small networks that were identified relatively recently \citep{gordon2017individual}, and their significance for higher cognitive functions requires further elucidation.

\begin{figure}
    \centering
    \includegraphics[width=\linewidth]{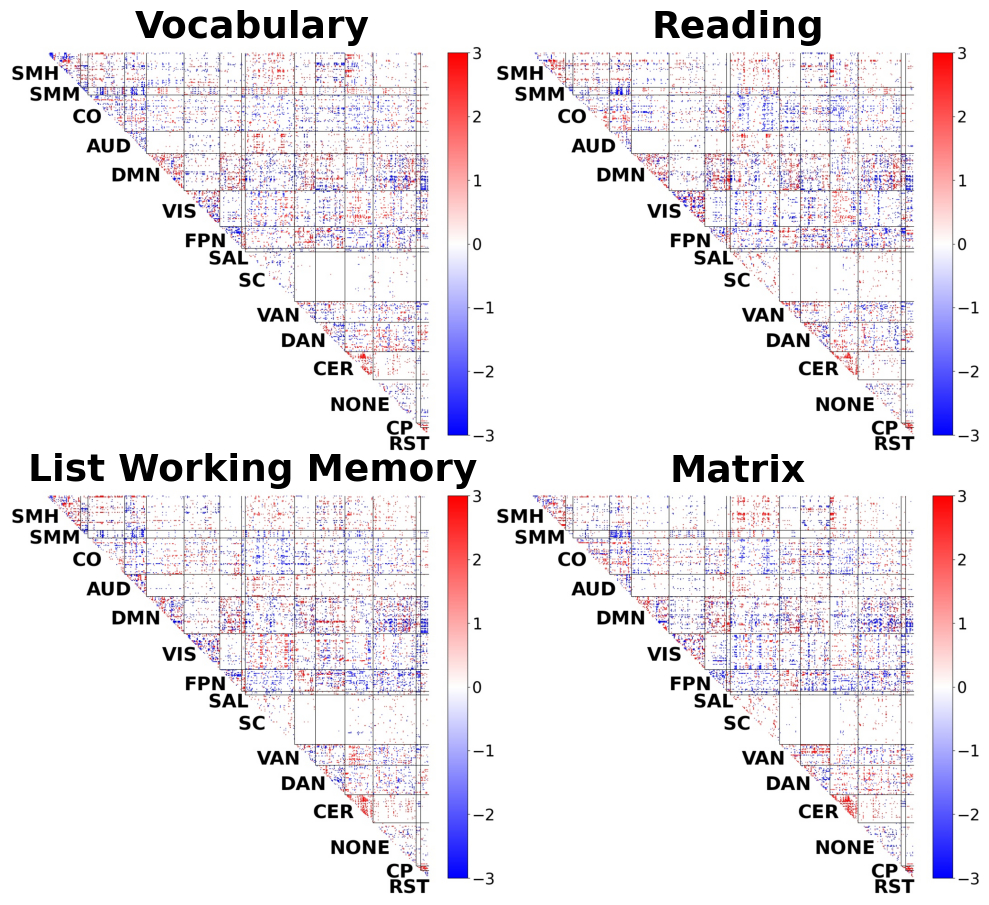}
  \caption{Cartographic visualization showing standardized estimates of the original coefficients, grouped by region/network, that were obtained after applying MTL(0.7) + SI jointly to all four tasks. A threshold of 1.5 is used to improve readability, with smaller-magnitude coefficients set to zero.}
  \label{fig:mtl-original}
\end{figure}

\begin{table}[ht]
\centering
\begin{tabular}{ll | ll}
  \hline \hline
Abbreviation & Full Region/Network Name & Abbreviation & Full Region/Network Name \\ 
\hline
SMH & Somatomotor-Hand & SMM & Somatomotor-Mouth\\
CO & Cingulo-Opercular & AUD & Auditory \\
DMN & Default & VIS & Visual \\
FPN & Frontoparietal & SAL & Salience \\
SC & Subcortical & VAN & Ventral Attention \\
DAN & Dorsal Attention & CER & Cerebellum \\
MONE & Not Named & CP & Cingulo-Parietal \\
RST & Retrosplenial Temporal \\
   \hline  \hline
\end{tabular}
\caption{Names and abbreviations for the 15 groups of ROIs, known as regions/networks}
\label{tab:Networks}
\end{table}

Overall, we find compelling evidence for shared connectivity patterns between the two fluid intelligence tasks and the two crystallized intelligence tasks. A joint approach to model selection recovers more shared structure across all four tasks than performing model selection separately, and selective inference provides additional statistical reassurance that the shared structure we identified through MTL reflects real patterns in the data. Our approach contributes a new perspective to the literature on theories of intelligence, using neurological data to discern the relationships between tasks rather than inferring them from behavioral data.

\section{Discussion}
\label{discussion}

In this paper, we address two key limitations of previous research on the role of the functional connectome in human cognition. Most previous studies have investigated the connectivity patterns associated with general cognitive ability, inferred from behavioral data, or the connectivity patterns associated with specific cognitive abilities like working memory.  By contrast, we leverage a MTL approach to discover the relationship between cognitive domains entirely from shared patterns in the brain-behavior data, avoiding the need to make any inferences about general ability from behavioral data alone. Most prior studies on the role of the functional connectome in human cognition have also focused on predictive power, neglecting statistical inference. We address this shortcoming by developing selective inference procedures to measure the strength and certainty of each brain-behavior relationship discovered from the data, offering improved interpretability. 

By applying our selective inference procedures for MTL to two tasks from the ABCD study that implicate fluid intelligence and two tasks from the ABCD study that implicate crystalized intelligence, we uncover additional shared structure that cannot be detected by modeling the fluid and crystalized tasks separately. This shared structure provides initial support for a general factor model of cognitive abilities. After applying MTL+ SI  to the data, we use cartographic mapping to visualize the brain regions involved in each of the four tasks. Our results reveal that connections involving DMN, FPN, and visual network generally have the most predictive power across tasks, showing some agreement with previous studies. 

\section{Acknowledgments}
We would like to thank Ji Zhu and Daniel Kessler for their helpful feedback throughout the project;  
Qianhua Shan for guidance in accessing the ABCD dataset; Aman Taxali for generating the cartographic maps and helping to run code in parallel on the University of Michigan's high-performance computing cluster; Michael Angstadt for creating the NDA study associated with this paper; and Tian Xie and Qiang Chen for testing different simulations as part of their undergraduate research project.

Data used in the preparation of this article were obtained from the Adolescent Brain Cognitive Development (ABCD) Study (\url{https://abcdstudy.org}), held in the NIMH Data Archive (NDA). This is a multisite, longitudinal study designed to recruit more than 10,000 children age 9-10 and follow them over 10 years into early adulthood. The ABCD Study is supported by the National Institutes of Health and additional federal partners under award numbers U01DA041022, U01DA041028, U01DA041048, U01DA041089, U01DA041106, U01DA041117, U01DA041120, U01DA041134, \\ U01DA041148, U01DA041156, U01DA041174, U24DA041123, and U24DA041147. A full list of supporters is available at \url{https://abcdstudy.org/nih-collaborators}. A listing of participating sites and a complete listing of the study investigators can be found at \url{https://abcdstudy.org/consortium_members/}. ABCD consortium investigators designed and implemented the study and/or provided data but did not necessarily participate in analysis or writing of this report. This manuscript reflects the views of the authors and may not reflect the opinions or views of the NIH or ABCD consortium investigators. The ABCD data repository grows and changes over time. The ABCD data used in this report came from NDA Study 721, 10.15154/1504041, which can be found at \url{https://nda.nih.gov/study.html?id=721}. The specific NDA study associated with this report is NDA Study 1689, 10.15154/1527789.


S. Panigrahi's research is supported in part by NSF grants 1951980 and 2113342.
N.  Stewart is supported in part by NSF RTG grant 1646108 and a Rackham Science Award from the University of Michigan. E. Levina's research is supported in part by NSF grants 1916222, 2052918, and 2210439 and NIH grant R01MH123458.

\bibliographystyle{apalike} 
\bibliography{references}

\newpage 
\section{Appendix}
\addcontentsline{toc}{section}{Appendices}
\renewcommand{\thesubsection}{\Alph{subsection}}

\subsection{Proofs of Results in Section 3}
\label{App:sec1}
\begin{proof} Lemma 3.1. \ \  
The proof follows immediately from the definition of $\Stacked{V}$, $\StackedSymbol{\Gamma}$ and by noting that 
the constraints $b^{(k)}>0$ for all $k\in [K]$ are equivalent to
$$ \Stacked{V}>\mathbf{0}, \  \StackedSymbol{\Gamma} - D\Stacked{V} > \mathbf{0}.$$
\end{proof}

\begin{proof} Theorem 3.2. \ \  
Based on the independence between $\h{\StackedSymbol{\beta}}_{\Stacked{E}}$, $\h{\StackedSymbol{\beta}}_{\perp}$ and $\Stacked{W}$, the unconditional law for the corresponding variables is given by
$$ 
 \prod_{k=1}^K \rho\left(\; \h{\beta}_{E_k}^{(k)}; \beta_{E_k}^{(k)}, \sigma_k^2 \cdot \left({X^{(k)}_{E_k}}^{\prime} X^{(k)}_{E_k}\right)^{-1}\right)\rho\left(\h{\beta}_{\perp}^{(k)}; \mathbf{0}, \Sigma^{(k)}_{\perp} \right)\rho\left(w^{(k)}; \mathbf{0}, \Omega^{(k)}\right),
$$
treating $E_k$ for $k\in [K]$ as fixed sets.
Consider the change of variables
\begin{equation}
\label{CoV}
\Stacked{W} \stackrel{\left(\Pi_{\Stacked{X}'\Stacked{Y}}\circ\Psi_A\right)^{-1} }{\xrightarrow{\hspace*{1.25cm}}}  \begin{pmatrix}\Stacked{V} &\StackedSymbol{\Gamma} & \Stacked{U} \end{pmatrix},
\end{equation}
formed by inverting the composition of $\Pi_{\Stacked{X}^{\prime}\Stacked{Y}}$, defined in (11), and $\Psi_{\mathcal{A}}$.
Applying \eqref{CoV}, we observe that that the law of the variables $\h{\StackedSymbol{\beta}}_{\Stacked{E}}$, $\h{\StackedSymbol{\beta}}_{\perp}$, $\Stacked{V}$, $\StackedSymbol{\Gamma}$ and  $\Stacked{U}$ is equal to
\begin{equation*}
\begin{aligned}
& \big\lvert{J_{\left(\Pi_{\Stacked{X}^{\prime}\Stacked{Y}}\circ\Psi_A\right)}}\big\vert \left\{\prod_{k=1}^K  \rho\left(\h{\beta}_{E_k}^{(k)}; \beta_{E_k}^{(k)}, \sigma_k^2 \cdot \left({X^{(k)}_{E_k}}^{\prime} X^{(k)}_{E_k}\right)^{-1}\right)\rho\left(\h{\beta}_{\perp}^{(k)}; \mathbf{0}, \Sigma^{(k)}_{\perp}\right) \right\}  \ \\
&\;\;\;\;\;\;\;\;\;\;\;\;\;\;\;\;\;\;\;\;\;\;\;\;\;\;\;\;\;\;\;\;\;\;\;\;\;\;\;\;\;\;\;\;\;\;\;\;\;\;\;\;\;\;\;\;\;\;\;\;\;\;\;\;\;\;\;\;\;\;\;\;\;\;\;\;\;\;\;\;\;\;\times\rho\left(\Pi_{\Stacked{X}^{\prime}\Stacked{Y}}\circ\Psi_A \left( \Stacked{V}, \StackedSymbol{\Gamma}, \Stacked{U} \right); \mathbf{0}, \bm{\Omega}\right),
\end{aligned}
\end{equation*}
where we let $J_{\left(\Pi_{\Stacked{X}^{\prime}\Stacked{Y}}\circ\Psi_A\right)} $ be the Jacobian for the change of variables mapping and $\Stacked{\Omega}\in \mathbb{R}^{Kp\times Kp}$ is the block diagonal matrix with the diagonal entries equal to $\Omega^{(k)} \in \mathbb{R}^{p\times p}$ for $k \in [K]$.
To complete the details of the proof, we introduce some matrices.
Let $\mathcal{A}= \begin{bmatrix} \mathcal{A}_1 & \mathcal{A}_2 \end{bmatrix}$. Note that
$$\left(\Pi_{\Stacked{X}^{\prime}\Stacked{Y}}\circ\Psi_A\right) \left( \Stacked{V}, \StackedSymbol{\Gamma}, \Stacked{U} \right) = \mathcal{C}_1 \h{\StackedSymbol{\beta}}_{\Stacked{E}} + \mathcal{C}_2 \Stacked{V} + f\left( \StackedSymbol{\Gamma}, \Stacked{U} ; \h{\StackedSymbol{\beta}}_{\perp} \right) $$
where $$\mathcal{C}_1 = - \Dg \left( {X^{(1)}}'{X^{(1)}_{E_1}},\dots,{X^{(K)}}'{X^{(K)}_{E_K}}\right) $$
$$\mathcal{C}_0 = \left(- \mathcal{C}_1 + \epsilon \; \Dg \left( \left(\rm{I}_{\delta_1} \;\;\; \mathbf{0}  \right)',\dots,\left(\rm{I}_{\delta_K} \;\;\; \mathbf{0}  \right)' \right) \right) \Dg \left(\Stacked{S}\right)$$
$$\mathcal{C}_2 = \mathcal{C}_0 \left(\mathcal{A}_1 - \mathcal{A}_2 D \right), \text{ and }$$
$$f\left( \StackedSymbol{\Gamma}, \Stacked{U} ; \h{\StackedSymbol{\beta}}_{\perp} \right) =  \Dg\left(\Stacked{\Lambda}\right) \left( \Stacked{S}' \;\; \Stacked{U}' \right)' + \mathcal{C}_0 \mathcal{A}_2 \StackedSymbol{\Gamma} - \h{\StackedSymbol{\beta}}_{\perp}.$$
Additionally, we define 
$$\Delta^{-1} = \mathcal{C}_2' \StackedSymbol{\Omega}^{-1} \mathcal{C}_2; \;\;\; P= -\Delta \mathcal{C}_2' \StackedSymbol{\Omega}^{-1} \mathcal{C}_1; \;\;\; q = -\Delta \mathcal{C}_2' \StackedSymbol{\Omega}^{-1} f \left( \StackedSymbol{\Gamma}, \Stacked{U} ; \h{\StackedSymbol{\beta}}_{\perp} \right);$$
$$\Sigma^{-1} =  \Dg \left( \frac{1}{\sigma_1^2} {X^{(1)}_{E_1}}'{X^{(1)}_{E_1}}, \dots, \frac{1}{\sigma_K^2} {X^{(K)}_{E_K}}'{X^{(K)}_{E_K}} \right) + \mathcal{C}_1' \StackedSymbol{\Omega}^{-1} \mathcal{C}_1 - P' \Delta^{-1} P;$$
$$L = \Sigma \; \Dg \left( \frac{1}{\sigma_1^2} {X^{(1)}_{E_1}}'{X^{(1)}_{E_1}}, \dots, \frac{1}{\sigma_K^2} {X^{(K)}_{E_K}}'{X^{(K)}_{E_K}} \right); \text{ and }$$
$$m = \Sigma \; P' \Delta^{-1} q-\Sigma \; \mathcal{C}_1' \StackedSymbol{\Omega}^{-1} f\left( \StackedSymbol{\Gamma}, \Stacked{U} ; \h{\StackedSymbol{\beta}}_{\perp} \right).$$
Conditioning upon $\h{\StackedSymbol{\beta}}_{\perp}$ and the event in (12), which we note is equivalent to 
$$ \left\{ \Stacked{V}>\mathbf{0}, \  \StackedSymbol{\Gamma} - D \Stacked{V} > 0 , \ \h{\StackedSymbol{\Gamma}} = \StackedSymbol{\Gamma}, \ \h{\Stacked{U}}=\Stacked{U} \right\}$$
based on Lemma 3.1, the law for 
$\h{\StackedSymbol{\beta}}_{\Stacked{E}}$ and $\Stacked{V}$ is given up to a constant by
\begin{equation*}
\begin{aligned}
&\left(\int \rho\left( \; \localvar{\beta}; L \StackedSymbol{\beta}_{\Stacked{E}} +m, \Sigma \right) \cdot \rho(\localvar{V}; P\localvar{\beta} + q, \Delta) \cdot 1\left(H \localvar{V} \geq g \right) \ d\localvar{V} d \localvar{\beta} \; \right)^{-1} \cdot \\ & \;\;\;\;\;\;\;\;\;\;\;\;\;\;\;\;\;\;\;\;\;\;\;\;\;\;\;\;\;\;\;\;\;\;\;\;\;\; \rho\left(\h{\StackedSymbol{\beta}}_{\Stacked{E}}; L\StackedSymbol{\beta}_{\Stacked{E}} +m, \Sigma \right) \cdot \rho(\Stacked{V}; P \h{\StackedSymbol{\beta}}_{\Stacked{E}} + q, \Delta)  \cdot 1\left(H \Stacked{V} \geq g \right).
\end{aligned}
\end{equation*}
Integrating over $\Stacked{V}$ in the last display and noting that 
$$\int_{\{\localvar{V}: H\localvar{V} \geq g \}}\rho(\localvar{V}; P\h{\StackedSymbol{\beta}}_{\Stacked{E}} + q, \Delta)d \localvar{V}$$
is a constant free from the parameters $\StackedSymbol{\beta}_{\Stacked{E}}$ in our model, we complete the proof of the Theorem.
\end{proof}

\begin{proof} Theorem 3.3. \ \  
Observe that the \
maximum likelihood estimate is obtained by maximizing the following objective with respect to $\StackedSymbol{\beta}_{\Stacked{E}}$:
\begin{equation}
\label{mle:problem}
\begin{aligned}
& \underset{\StackedSymbol{\beta}_{\Stacked{E}}}{\text{maximize}}\ \  \h{\StackedSymbol{\beta}}_{\Stacked{E}}^{\prime}\Sigma^{-1}(L\StackedSymbol{\beta}_{\Stacked{E}} + m) +  \underset{\localvar{\beta}, \localvar{V}}{\inf}\; \Big\{ \dfrac{1}{2} \localvar{\beta}' \Sigma^{-1} \localvar{\beta} -\localvar{\beta}'\Sigma^{-1} (L\StackedSymbol{\beta}_{\Stacked{E}} + m)\\
&\;\;\;\;\;\;\;\;\;\; \;\;\;\;\;\;\;\;\;\; \;\;\;\;\;\;\;\;\;\; \;\;\;\;\;\;\;\;\;\; \;\;\;\;\;\;\;\;\;\; + \dfrac{1}{2}(\localvar{V}-P\localvar{\beta} - q)'\Delta^{-1} (\localvar{V}-P\localvar{\beta} - q)+ \phi_{H,g}(\localvar{V})\Big\}.
\end{aligned}
\end{equation}
Let $$F^*(\zeta) = \underset{\localvar{\beta}}{\sup}\; \Big\{\localvar{\beta}'\zeta - \dfrac{1}{2} \localvar{\beta}' \Sigma^{-1} \localvar{\beta} -\underset{\localvar{V}}{\inf} \; \dfrac{1}{2}(\localvar{V}-P\localvar{\beta} - q)' \Delta^{-1} (\localvar{V}-P\localvar{\beta} - q)+ \phi_{H,g}(\localvar{V})\Big\}$$
represent the convex conjugate for
$$F(\beta)= \dfrac{1}{2} \beta' \Sigma^{-1} \beta + \underset{\localvar{V}}{\inf} \; \dfrac{1}{2}(\localvar{V}-P\beta - q)'\Delta^{-1} (\localvar{V}-P\beta - q)+ \phi_{H,g}(\localvar{V}).$$
Then, solving \eqref{mle:problem} is equivalent to the problem 
\begin{equation*}
\begin{aligned}
& \underset{\StackedSymbol{\beta}_{\Stacked{E}}}{\text{maximize}}\ \  \h{\StackedSymbol{\beta}}_{\Stacked{E}}'\Sigma^{-1} (L\StackedSymbol{\beta}_{\Stacked{E}} + m) -  \underset{\localvar{\beta}}{\sup}\; \Big\{\localvar{\beta}'\Sigma^{-1} (L\StackedSymbol{\beta}_{\Stacked{E}} + m) - \dfrac{1}{2} \localvar{\beta}' \Sigma^{-1} \localvar{\beta}\\
&\;\;\;\;\;\;\;\;\;\; \;\;\;\;\;\;\;\;\;\; \;\;\;\;\;\;\;\;\;\; \;\;\;\;\;\;\;\;\;\; \;\;\;\;\;\;\;\;\;\;  -\underset{\localvar{V}}{\inf}\; \dfrac{1}{2}(\localvar{V}-P \localvar{\beta} - q)^{\prime}\Delta^{-1} (\localvar{V}-P\localvar{\beta} - q)+ \phi_{H,g}(\localvar{V})\Big\}\\
&= \underset{\StackedSymbol{\beta}_{\Stacked{E}}}{\text{maximize}}\ \  \h{\StackedSymbol{\beta}}_{\Stacked{E}}'\Sigma^{-1}(L\StackedSymbol{\beta}_{\Stacked{E}} + m) - F^*(\Sigma^{-1} (L\StackedSymbol{\beta}_{\Stacked{E}} + m)).
\end{aligned}
\end{equation*}
Using the identity $(\grad  F^*)^{-1} = \grad F$ from convex analysis, we have 
\begin{equation*}
\begin{aligned}
\Sigma^{-1}(L\Mb_{\Stacked{E}} + m) &= (\grad  F^*)^{-1}(\h{\StackedSymbol{\beta}}_{\Stacked{E}})\\
&= \Sigma^{-1}\h{\StackedSymbol{\beta}}_{\Stacked{E}} + \Big(\dfrac{\partial}{\partial \beta}V(\beta)\Big\lvert_{\beta=\h{\StackedSymbol{\beta}}_{\Stacked{E}}} -P \Big)' \Delta^{-1} (\h{\Stacked{V}}-P\h{\StackedSymbol{\beta}}_{\Stacked{E}} - q) \\ &\;\;\;\;\;\;\;\;\;\;\;\;\;\;\;\;\;\;\;\;\;\;\;\;\;\;\;\;\;\;\;\;\;\;\;\;\;\;\;\;\;\;\;\;\;\;\;\;\;\;\;\;\; + \Big(\dfrac{\partial}{\partial \beta}V(\beta)\Big\lvert_{\beta=\h{\StackedSymbol{\beta}}_{\Stacked{E}}}\Big)' \grad\phi_{H,g}(\h{\Stacked{V}})
\end{aligned},
\end{equation*}
where 
\begin{equation}
\label{V:opt}
V(\beta) = \underset{\localvar{V}}{\text{argmin}} \; \frac{1}{2}(\localvar{V}-P\beta - q)' \Delta^{-1} (\localvar{V}-P\beta - q) + \phi_{H,g}(\localvar{V})
\end{equation}
and $V(\h{\StackedSymbol{\beta}}_{\Stacked{E}}) = \h{\Stacked{V}}$, the optimizer defined in  (16).
By the definition of $\h{\Stacked{V}}$,
$$ \Delta^{-1} (\h{\Stacked{V}}-P\h{\StackedSymbol{\beta}}_{\Stacked{E}} - q) + \grad\phi_{H,g}(\h{\Stacked{V}}) = 0,$$
and we derive the system of estimating equations
$$\Mb_{\Stacked{E}} = L^{-1}\h{\StackedSymbol{\beta}}_{\Stacked{E}} + L^{-1}\Sigma P' \Delta^{-1}(P\h{\StackedSymbol{\beta}}_{\Stacked{E}} + q - \h{\Stacked{V}})-L^{-1}m.$$

\medskip
\noindent Using the curvature of the smooth approximate likelihood, we observe that the expression for the observed Fisher information matrix is equal to
\begin{equation*}
\begin{aligned}
L' \Sigma^{-1} \grad^2 F^*(\zeta)\lvert_{\zeta= \Sigma^{-1}(L\Mb_{\Stacked{E}} + m)} \Sigma^{-1}L &= 
&=  L' \Sigma^{-1}\dfrac{\partial}{\partial\zeta} \beta^*(\zeta)\Big\lvert_{\zeta= \Sigma^{-1}(L\Mb_{\Stacked{E}} + m)} \Sigma^{-1}L
\end{aligned}.
\end{equation*}
Here, $\grad^2$ denotes the Hessian matrix, and 
\begin{equation}
\label{b:opt}
\beta^*(\zeta) = \underset{\localvar{\beta}}{\text{argsup}}\; \Big\{\localvar{\beta}'\zeta - \dfrac{1}{2} \localvar{\beta}' \Sigma^{-1} \localvar{\beta} -\underset{\localvar{V}}{\inf} \; \dfrac{1}{2}(\localvar{V}-P\localvar{\beta} - q)^{\prime}\Delta^{-1} (\localvar{V}-P\localvar{\beta} - q)+ \phi_{H,g}(\localvar{V})\Big\}.
\end{equation}
From the K.K.T. conditions of optimality for \eqref{b:opt},
\begin{equation*}
\zeta =  (\Sigma^{-1}+  P' \Delta^{-1} P)\beta^*(\zeta) -  P' \Delta^{-1}(V(\beta^*(\zeta)) - q),
\end{equation*}
which leads to noting
$$\dfrac{\partial}{\partial\beta} V(\beta)\Big\lvert_{\beta= \beta^*(\zeta)}= \left(\Delta^{-1} + \grad^2 \phi_{H,g}({V(\beta^*(\zeta)) })\right)^{-1}\Delta^{-1}P,$$
and 
\begin{equation*}
\begin{aligned}
\dfrac{\partial}{\partial\zeta} \beta^*(\zeta)\Big\lvert_{\zeta= \zeta_0} &= \left( \Sigma^{-1} + P' \Delta^{-1}P - P' \Delta^{-1}\dfrac{\partial}{\partial\beta} V(\beta)\Big\lvert_{\beta= \beta^*(\zeta)} \right )^{-1}\\
&= \left( \Sigma^{-1} +P' \Delta^{-1}P - P' \Delta^{-1}\left(\Delta^{-1} + \grad^2 \phi_{H,g}(V(\beta^*(\zeta_0)))\right)^{-1}\Delta^{-1}P \right )^{-1}.
\end{aligned}
\end{equation*}
Using, once again, the K.K.T. conditions of optimality for \eqref{b:opt}, we observe that $$\beta^*\left(\Sigma^{-1}(L\Mb_{\Stacked{E}} + m)\right)= \h{\StackedSymbol{\beta}}_{\Stacked{E}}.$$
Plugging 
$$\dfrac{\partial}{\partial\zeta} \beta^*(\zeta)\Big\lvert_{\zeta= \Sigma^{-1}(L\Mb_{\Stacked{E}} + m)}$$
into the expression for the observed Fisher information, we obtain the expression for the inverse information matrix:
$$\InvFI=L^{-1} \Sigma {L'}^{-1}  + L^{-1} \Sigma\left(P' \Delta^{-1}P - P' \Delta^{-1} \left(\Delta^{-1} + \grad^2 \phi_{H,g}(\h{\Stacked{V}}) \right)^{-1}\Delta^{-1}P \right)  \Sigma {L'}^{-1}.$$
\end{proof}

\subsection{Impracticality of Existing Conditional Prescription in Multi-Task Setting}
\label{App:sec2}

In this section, we show in detail why the existing conditional prescription for other $\ell_1$-regularized algorithms after conditioning on the event in (10) does not lead to tractable inference. Given a stationary point $\left(\Stacked{B}, \Stacked{U} \right)$ defined through the K.K.T. mapping in (11), we immediately note the following equivalence
\begin{equation}
\label{Lee:event:characterization}
\h{\Stacked{E}} = \Stacked{E}, \;\h{\Stacked{S}}=\Stacked{S}\;\; \equiv \;\; b^{(k)} >\mathbf{0}, \|u^{(k)}\|_{\infty}\leq 1 \text{ for all } k \in [K].
\end{equation}
The alternative characterization of the selection event through the K.K.T. mapping is directly analogous to the characterization used previously in conducting selective inference for the single-task LASSO. From here, the stationary mapping suggests a change of variables that may be useful in deriving a joint density function for the response and randomization variables. Proposition \ref{prop:jacobian} provides us an expression for the Jacobian associated with the change of variables, and Proposition \ref{prop:Leelaw} describes the resulting selection-adjusted likelihood using the marginal law of the response vectors for the $K$ tasks. Let
$Q = \Dg(Q^{(1,1)},\dots,Q^{(K,K)})$, where the block matrices $Q^{(k,k)} \in R^{p-\delta_k \times p-\delta_k}$ have columns given by $$
 Q_{\cdot,\widetilde{j}_{k}}^{(k,k)} =  \Lambda_{\widetilde{j}_k}^{(k)} \cdot  \mathbf{e}_{\ \widetilde{j}_k}\;\;\;\text{ for } j \in -E_k$$
Here, $\mathbf{e}_{\ \widetilde{j}_k} \in \mathbb{R}^{p-\delta_k}$ is the standard basis vector whose ${\widetilde{j}_k}^{\text{th}}$ component is one, and all other components are zero. Let $$R = \begin{pmatrix} R^{(1,1)} & \dots & R^{(1,K)} \\ \vdots & \ddots & \vdots \\
R^{(K,1)} & \dots & R^{(K,K)} \\ \end{pmatrix} ,$$ where the block matrices  $R^{(k,k')} \in \mathbb{R}^{\delta_k \times \delta_{k'}}$ are defined column-wise as
$$R^{(k,k')}_{\cdot , \widetilde{j}_{k'}} = \begin{cases}   -\left(\dfrac{1}{2} \lambda \hspace{1mm}  s^{(k)}_{\ \widetilde{j}_k} \hspace{1.5mm} {\Gamma^{(j)}}^{-\frac{3}{2}} s^{(k')}_{\ \widetilde{j}_{k'}}\right) \cdot \mathbf{e}_{\ \widetilde{j}_{k}}  & \text{ if } j \in E_k \text{ and } \Lambda^{(k)}_{\widetilde{j}_k} > \lambda_0 \\ \ \  \ \  \mathbf{0} & \text{ otherwise}\end{cases} \;\;\;\;\;\text{ for } j \in E_{k'}.$$
Here, $\mathbf{e}_{\ \widetilde{j}_k} \in \mathbb{R}^{\delta_k}$ is the standard basis vector whose ${\widetilde{j}_k}^{\text{th}}$ component is one, and all other components are zero. Finally, let $$ T = \Dg\begin{pmatrix} {({X^{(1)}_{E_1}})}' X^{(1)}_{E_1} + \epsilon \cdot \rm{I}_{\delta_1} & {({X^{(2)}_{E_2}})}' X^{(2)}_{E_2} + \epsilon \cdot \rm{I}_{\delta_2} & \dots & {({X^{(K)}_{E_K}})}' X^{(K)}_{E_K} + \epsilon \cdot \rm{I}_{\delta_K} \end{pmatrix} \cdot \Dg(\mathbf{S}).$$
\begin{proposition}
\label{prop:jacobian}
Consider a change of variables $\Stacked{W} \stackrel{\Pi_{\Stacked{X}'\Stacked{Y}}^{-1}}{\longrightarrow}  \begin{pmatrix}\Stacked{B} & \Stacked{U} \end{pmatrix}$ where
 $$\Stacked{W} = \Pi_{\Stacked{X}'\Stacked{Y}}\begin{pmatrix}\Stacked{B} & \Stacked{U} \end{pmatrix} := \begin{pmatrix} \pi^{(1)}(b^{(1)},u^{(1)}) & \pi^{(2)}(b^{(2)},u^{(2)})  &\cdots & \pi^{(K)}(b^{(K)},u^{(K)})\end{pmatrix}$$
and such that $\pi^{(k)}(\cdot)$ is defined according to (11).
The Jacobian determinant associated with the above change of variables is given by
\begin{align*}
\big|J_{\Pi_{\Stacked{X}'\Stacked{Y}}} \left( \Stacked{B}, \Stacked{U} \right) \big|
&=  \det \left( Q \right) \det \left( R + T \right).
\end{align*}
\end{proposition}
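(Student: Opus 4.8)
\textbf{Proof proposal for Proposition \ref{prop:jacobian}.} The plan is to compute the Jacobian $J_{\Pi_{\Stacked{X}'\Stacked{Y}}}$ of the stationary map \eqref{KKT:k} directly, in block form, and to show that all but two of its blocks either vanish or are block-diagonal. Throughout, $\Stacked{X}'\Stacked{Y}$ is held fixed, so $\Pi_{\Stacked{X}'\Stacked{Y}}$ is a map from $(\Stacked{B},\Stacked{U})\in\mathbb{R}^{q}\times\mathbb{R}^{Kp-q}$ to $\Stacked{W}\in\mathbb{R}^{Kp}$; since $q+(Kp-q)=Kp$, the Jacobian is a square $Kp\times Kp$ matrix. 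I would order both the input coordinates and the output coordinates task by task, and within each task with the active ($E_k$) coordinates preceding the inactive ($-E_k$) ones, so that each $\pi^{(k)}$ in \eqref{KKT:k} splits into an active part $\pi^{(k)}_{E_k}$ and an inactive part $\pi^{(k)}_{-E_k}$, and $\Stacked{W}$ correspondingly splits into $\Stacked{W}_{E}$ and $\Stacked{W}_{-E}$.

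Next I would isolate the block-triangular structure. Two facts do the work. (i) The active output $\pi^{(k)}_{E_k}$ depends on the inputs only through $\Stacked{B}$: reading off \eqref{KKT:k}, it equals $-{X^{(k)}_{E_k}}'y^{(k)}+({X^{(k)}_{E_k}}'X^{(k)}_{E_k}+\epsilon I_{q_k})\Dg(s^{(k)})b^{(k)}+\Dg(\Lambda^{(k)}_{E_k})s^{(k)}$, a function of $b^{(k)}$, of the fixed active signs $s^{(k)}$, and of the adaptive weights $\Lambda^{(k)}_{E_k}$ — and each $\Lambda^{(k)}_{\widetilde{j}_k}=\min\{\lambda_0,\lambda(\Gamma^{(j)})^{-1/2}\}$ is a function of the $\Gamma$-values, which are sums of components of $\Stacked{B}$ alone. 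Hence $\partial\Stacked{W}_{E}/\partial\Stacked{U}=0$. (ii) The inactive output $\pi^{(k)}_{-E_k}$ depends on $\Stacked{U}$ only through $u^{(k)}$, and linearly, with $\partial\pi^{(k)}_{-E_k}/\partial u^{(k)}=\Dg(\Lambda^{(k)}_{-E_k})$, since the weights $\Lambda^{(k)}_{-E_k}$ again involve only $\Gamma$-values, hence only $\Stacked{B}$. Stacking over $k$ gives $\partial\Stacked{W}_{-E}/\partial\Stacked{U}=Q$. Consequently $J_{\Pi_{\Stacked{X}'\Stacked{Y}}}$ is block lower-triangular with diagonal blocks $\partial\Stacked{W}_{E}/\partial\Stacked{B}$ and $Q$; the remaining block $\partial\Stacked{W}_{-E}/\partial\Stacked{B}$ is generically nonzero but does not enter the determinant, so $\lvert J_{\Pi_{\Stacked{X}'\Stacked{Y}}}\rvert=\det(Q)\,\lvert\det(\partial\Stacked{W}_{E}/\partial\Stacked{B})\rvert$.

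Then I would evaluate the active-by-active block $\partial\Stacked{W}_{E}/\partial\Stacked{B}$ from the expression for $\pi^{(k)}_{E_k}$ recorded above. Its first term is constant; its second term depends only on $b^{(k)}$ and contributes $({X^{(k)}_{E_k}}'X^{(k)}_{E_k}+\epsilon I_{q_k})\Dg(s^{(k)})$ to the $(k,k)$ diagonal block and nothing off-diagonal, so assembling over $k$ it is exactly the block-diagonal matrix $T$. Its third term, $\Dg(\Lambda^{(k)}_{E_k})s^{(k)}$, is where the tasks couple: differentiating by the chain rule through $\Lambda^{(k)}_{\widetilde{j}_k}=\lambda(\Gamma^{(j)})^{-1/2}$ — valid on the open region where the defining minimum is not clipped to $\lambda_0$; on the complementary region $\Lambda^{(k)}_{\widetilde{j}_k}$ is locally constant and contributes nothing, while the kink set $\{\lambda(\Gamma^{(j)})^{-1/2}=\lambda_0\}$ has Lebesgue measure zero and can be discarded — together with $\partial\Gamma^{(j)}/\partial b^{(k')}_{\widetilde{j}_{k'}}=\mathbf 1\{k'\in\kappa(j)\}$, produces a matrix whose $(k,k')$ block vanishes unless predictor $j$ lies in $E_k\cap E_{k'}$ and otherwise carries a single nonzero entry built from $-\tfrac12\lambda s^{(k)}_{\widetilde{j}_k}(\Gamma^{(j)})^{-3/2}$ (with the appropriate active-sign normalization); this is precisely the matrix $R$. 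Adding the two contributions gives $\partial\Stacked{W}_{E}/\partial\Stacked{B}=R+T$, and combining with the previous step yields $\lvert J_{\Pi_{\Stacked{X}'\Stacked{Y}}}\rvert=\det(Q)\det(R+T)$.

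I expect the main obstacle to be the bookkeeping for the data-adaptive weights $\Lambda^{(k)}$. Unlike the fixed-penalty single-task LASSO, here $\Lambda^{(k)}_{\widetilde{j}_k}$ is a nonlinear function of the shared quantity $\Gamma^{(j)}=\sum_{k'}|\widehat{\Theta}^{(k')}_j|$, so differentiating the stationary map couples the active coordinates across all tasks; this is exactly what produces the off-diagonal blocks $R^{(k,k')}$, and ultimately what makes the conditional law of Proposition \ref{prop:Leelaw} intractable. The delicate points are: tracking which predictor occupies which coordinate in each task under the per-task permutations $\widetilde{j}_k$; handling the non-smooth $\min$ by restricting to the full-measure region where it is differentiable; and, most importantly, confirming fact (i) — that no $\Stacked{U}$-coordinate leaks into the active outputs — since it is precisely this decoupling that collapses the $Kp\times Kp$ determinant into the product $\det(Q)\det(R+T)$.
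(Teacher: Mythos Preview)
Your proposal is correct and follows essentially the same route as the paper's own proof: both establish the block lower-triangular structure of $J_{\Pi_{\Stacked{X}'\Stacked{Y}}}$ by noting that $\partial\Stacked{W}_{E}/\partial\Stacked{U}=0$ and $\partial\Stacked{W}_{-E}/\partial\Stacked{U}=Q$, then identify the active-by-active block $\partial\Stacked{W}_{E}/\partial\Stacked{B}$ as $R+T$ to conclude $|J|=\det(Q)\det(R+T)$. Your treatment is in fact more careful in two places the paper glosses over --- the chain-rule derivation of $R$ through the shared quantities $\Gamma^{(j)}$, and the handling of the non-smooth $\min$ in $\Lambda^{(k)}_{\widetilde j_k}$ by restricting to the full-measure differentiable region --- while the paper simply records the block partial derivatives and asserts $M_1=R+T$.
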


\begin{proof} \ \
We note that the Jacobian for the change of variables mapping $J_{\Pi_{\Stacked{X}'\Stacked{Y}}}$ is the determinant in absolute value for the following matrix
\begin{align*}
&\begin{bmatrix} \frac{\partial \pi^{(1)}}{\partial (b^{(1)}, u^{(1)})} & \frac{\partial \pi^{(1)}}{\partial (b^{(2)}, u^{(2)})} &  \dots  & \frac{\partial \pi^{(1)}}{\partial (b^{(K)}, u^{(K)})} \\
 \frac{\partial \pi^{(2)}}{\partial (b^{(1)}, u^{(1)})} & \frac{\partial \pi^{(2)}}{\partial (b^{(2)}, u^{(2)})} &  \dots  & \frac{\partial \pi^{(2)}}{\partial (b^{(K)}, u^{(K)})}\\
 \vdots & \ddots  & & \vdots \\ 
 \frac{\partial \pi^{(K)}}{\partial (b^{(1)}, u^{(1)})} & \frac{\partial \pi^{(2)}}{\partial (b^{(2)}, u^{(2)})} &  \dots  & \frac{\partial \pi^{(K)}}{\partial (b^{(K)}, u^{(K)})} \end{bmatrix} \\
 &= \begin{bmatrix} \frac{\partial \omega_{E_1}^{(1)}}{\partial b^{(1)}} &  \frac{\partial \omega_{E_1}^{(1)}}{\partial u^{(1)}} & \frac{\partial \omega_{E_1}^{(1)}}{\partial b^{(2)}} &  \frac{\partial \omega_{E_1}^{(1)}}{\partial u^{(2)}}  &  \dots  & \frac{\partial \omega_{E_1}^{(1)}}{\partial b^{(K)}} &  \frac{\partial \omega_{E_1}^{(1)}}{\partial u^{(K)}}  \\
 \frac{\partial \omega_{-E_1}^{(1)}}{\partial b^{(1)}} &  \frac{\partial \omega_{-E_1}^{(1)}}{\partial u^{(1)}} & \frac{\partial \omega_{-E_1}^{(1)}}{\partial b^{(2)}} &  \frac{\partial \omega_{-E_1}^{(1)}}{\partial u^{(2)}}  &  \dots  & \frac{\partial \omega_{-E_1}^{(1)}}{\partial b^{(K)}} &  \frac{\partial \omega_{-E_1}^{(1)}}{\partial u^{(K)}}  \\
 \vdots & & & \ddots & & & \vdots \\
 \frac{\partial \omega_{E_K}^{(K)}}{\partial b^{(1)}} &  \frac{\partial \omega_{E_K}^{(K)}}{\partial u^{(1)}} & \frac{\partial \omega_{E_K}^{(K)}}{\partial b^{(2)}} &  \frac{\partial \omega_{E_K}^{(K)}}{\partial u^{(2)}}  &  \dots  & \frac{\partial \omega_{E_K}^{(1)}}{\partial b^{(K)}} &  \frac{\partial \omega_{E_K}^{(K)}}{\partial u^{(K)}}  \\
 \frac{\partial \omega_{-E_K}^{(K)}}{\partial b^{(1)}} &  \frac{\partial \omega_{-E_K}^{(K)}}{\partial u^{(1)}} & \frac{\partial \omega_{-E_K}^{(K)}}{\partial b^{(2)}} &  \frac{\partial \omega_{-E_K}^{(K)}}{\partial u^{(2)}}  &  \dots  & \frac{\partial \omega_{-E_K}^{(K)}}{\partial b^{(K)}} &  \frac{\partial \omega_{-E_K}^{(K)}}{\partial u^{(K)}} \end{bmatrix}.
\end{align*}
Noting in the above display $\frac{\partial \omega_{E_k}^{(k)}}{\partial u^{(k)}}=0$ and $\frac{\partial \omega_{E_k}^{(k)}}{\partial u^{(k')}}=0$ and $\frac{\partial \omega_{-E_k}^{(k)}}{\partial u^{(k')}}=0$ for all $k'\neq k$,
the above Jacobian equals
\begin{align}
\label{Jacob:matrix}
&\begin{bmatrix} \frac{\partial \omega_{E_1}^{(1)}}{\partial b^{(1)}} &  0 & \frac{\partial \omega_{E_1}^{(1)}}{\partial b^{(2)}} &  0  &  \dots  & \frac{\partial \omega_{E_1}^{(1)}}{\partial b^{(K)}} & 0  \\
 \frac{\partial \omega_{-E_1}^{(1)}}{\partial b^{(1)}} &  \frac{\partial \omega_{-E_1}^{(1)}}{\partial u^{(1)}} & \frac{\partial \omega_{-E_1}^{(1)}}{\partial b^{(2)}} &  0  &  \dots  & \frac{\partial \omega_{-E_1}^{(1)}}{\partial b^{(K)}} &  0 \\
 \vdots & & & \ddots & & & \vdots \\
 \frac{\partial \omega_{E_K}^{(K)}}{\partial b^{(1)}} & 0 & \frac{\partial \omega_{E_K}^{(K)}}{\partial b^{(2)}} &  0  &  \dots  & \frac{\partial \omega_{E_K}^{(1)}}{\partial b^{(K)}} &  0 \\
 \frac{\partial \omega_{-E_K}^{(K)}}{\partial b^{(1)}} & 0 & \frac{\partial \omega_{-E_K}^{(K)}}{\partial b^{(2)}} &  0 &  \dots  & \frac{\partial \omega_{-E_K}^{(K)}}{\partial b^{(K)}} &  \frac{\partial \omega_{-E_K}^{(K)}}{\partial u^{(K)}} \end{bmatrix}
\end{align}
which further equals in absolute value the determinant of 
$\begin{bmatrix} 
 M_1 & \mathbf{0}\\
 M_2 & Q
\end{bmatrix}$
where 
$$
M_1= \begin{bmatrix} \frac{\partial \omega_{E_1}^{(1)}}{\partial b^{(1)}} &  \frac{\partial \omega_{E_1}^{(1)}}{\partial b^{(2)}} & \dots & \frac{\partial \omega_{E_1}^{(1)}}{\partial b^{(K)}}\\
\frac{\partial \omega_{E_2}^{(2)}}{\partial b^{(1)}} &  \frac{\partial \omega_{E_2}^{(2)}}{\partial b^{(2)}} & \dots & \frac{\partial \omega_{E_2}^{(2)}}{\partial b^{(K)}} \\
\vdots &  \ddots & & \vdots\\
\frac{\partial \omega_{E_K}^{(K)}}{\partial b^{(1)}} &  \frac{\partial \omega_{E_K}^{(K)}}{\partial b^{(2)}} & \dots & \frac{\partial \omega_{E_K}^{(K)}}{\partial b^{(K)}} 
\end{bmatrix}, 
M_2= \begin{bmatrix} \frac{\partial \omega_{-E_1}^{(1)}}{\partial b^{(1)}} &  \frac{\partial \omega_{-E_1}^{(1)}}{\partial b^{(2)}} & \dots & \frac{\partial \omega_{-E_1}^{(1)}}{\partial b^{(K)}}\\
\frac{\partial \omega_{-E_2}^{(2)}}{\partial b^{(1)}} &  \frac{\partial \omega_{-E_2}^{(2)}}{\partial b^{(2)}} & \dots & \frac{\partial \omega_{-E_2}^{(2)}}{\partial b^{(K)}} \\
\vdots &  \ddots & & \vdots\\
\frac{\partial \omega_{-E_K}^{(K)}}{\partial b^{(1)}} &  \frac{\partial \omega_{-E_K}^{(K)}}{\partial b^{(2)}} & \dots & \frac{\partial \omega_{-E_K}^{(K)}}{\partial b^{(K)}} 
\end{bmatrix}, \text{ and }
$$
$$
Q =\Dg\left({\tfrac{\partial \omega_{-E_1}^{(1)}}{\partial u^{(1)}},  \tfrac{\partial \omega_{-E_2}^{(2)}}{\partial u^{(2)}}, \cdots, \tfrac{\partial \omega_{-E_K}^{(K)}}{\partial u^{(K)}}}\right).
$$
Observe $M_1 = R + T$, because
$$\frac{\partial \omega_{E_k}^{(k)}}{\partial b^{(k)}} = {X^{(k)}_{E_k}}' X^{(k)}_{E_k} + \epsilon \cdot \rm{I}_{\delta_k}  + R^{(k,k)},\  \;  \frac{\partial \omega_{E_k}^{(k)}}{\partial b^{(k')}} = R^{(k,k')} \text{ for } k\neq k'.$$
Using the block diagonal structure of \eqref{Jacob:matrix}, it follows that 
$\big| J_{\Pi_{\Stacked{X}'\Stacked{Y}}} \left( \Stacked{B}, \Stacked{U} \right) \big| = \det \left( Q \right) \det \left( R + T \right)$.
\end{proof}

\begin{proposition}
\label{prop:Leelaw}
Suppose we observe the event (10) after solving (7). Fixing the set
$$\mathcal{O}=\left\{ (\localvar{Y}, \localvar{B}, \localvar{U}): \localvar{b}^{(k)} >\mathbf{0}, \|\localvar{u}^{(k)}\|_{\infty}\leq 1 \text{ for all } k \in [K]\right\},$$
we let
\begin{equation*}
\begin{aligned}
\mathcal{N}_{\mathcal{O}}(\StackedSymbol{\beta}_{\Stacked{E}}) &:= \int_{\mathcal{O}}\Big\lvert{ J_{\Pi_{\Stacked{X}'\localvar{Y}}} \left(\localvar{B}, \localvar{U} \right)}\Big\rvert \cdot \prod_{k=1}^K \rho\left(\localvar{y}^{(k)} ; \, X_{E_k}^{(k)}\beta_{E_k}^{(k)}, \sigma_k^2 \cdot \rm{I}_{n_k} \right) \\
&\;\;\;\;\;\;\;\;\;\;\;\times \exp\left(-\frac{1}{2}\sum_{k=1}^K (\pi^{(k)}(\localvar{b}^{(k)},\localvar{u}^{(k)}))^{\prime} {\Omega^{(k)}}^{-1} \pi^{(k)}(\localvar{b}^{(k)},\localvar{u}^{(k)})\right)d\localvar{B} d\localvar{U} d\localvar{Y}.
\end{aligned}
\end{equation*}
Under the model in (9), the likelihood obtained from the law of $\Stacked{Y}$ conditional upon (10) is given up to a constant by
\begin{equation*}
\left(\mathcal{N}_{\mathcal{O}}(\StackedSymbol{\beta}_{\Stacked{E}})\right)^{-1}\prod_{k=1}^K \rho\left(y^{(k)} ; \, X_{E_k}^{(k)}\beta_{E_k}^{(k)}, \sigma_k^2 \cdot \rm{I}_{n_k} \right).
\end{equation*}
\end{proposition}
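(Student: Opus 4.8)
The plan is to derive the conditional law of $\Stacked{Y}$ given the event \eqref{Lee:event} by starting from the unconditional joint law of the responses and randomizations, passing to the K.K.T.\ coordinates $(\Stacked{B},\Stacked{U})$, cutting down to the selection region, and then integrating $(\Stacked{B},\Stacked{U})$ out. First I would write the unconditional joint density of $(\Stacked{Y},\Stacked{W})$: by the model \eqref{model:MTL} and the independence of $\Stacked{W}$ from $\Stacked{Y}$ and across tasks, it factors as $\prod_{k=1}^{K}\rho\big(y^{(k)};X^{(k)}_{E_k}\beta^{(k)}_{E_k},\sigma_k^2\rm{I}_{n_k}\big)\cdot\rho(\Stacked{W};\mathbf{0},\bm{\Omega})$, where $\bm{\Omega}$ is the block-diagonal covariance of $\Stacked{W}$. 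Next I would invoke the equivalence \eqref{Lee:event:characterization}, which rewrites the selection event $\{\h{\Stacked{E}}=\Stacked{E},\,\h{\Stacked{S}}=\Stacked{S}\}$ as the requirement that the unique stationary point $(\Stacked{B},\Stacked{U})$ of the strongly convex randomized objective (strong convexity coming from the $\epsilon\|\cdot\|_2^2$ term), produced by the K.K.T.\ map $\Pi_{\Stacked{X}'\Stacked{Y}}$ of \eqref{KKT:k}, lie in the fixed set $\mathcal{O}_0=\{(\Stacked{B},\Stacked{U}):b^{(k)}>\mathbf{0}\text{ and }\|u^{(k)}\|_\infty<1\text{ for all }k\in[K]\}$, which does not depend on $\Stacked{Y}$.

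The central step is the change of variables $\Stacked{W}\mapsto(\Stacked{B},\Stacked{U})$ through $\Stacked{W}=\Pi_{\Stacked{X}'\Stacked{Y}}(\Stacked{B},\Stacked{U})$, performed at each fixed $\Stacked{Y}$. Although $\Pi_{\Stacked{X}'\Stacked{Y}}$ depends on $\Stacked{Y}$ through the terms $-{X^{(k)}}'y^{(k)}$, the Jacobian relating $(\Stacked{Y},\Stacked{B},\Stacked{U})$ to $(\Stacked{Y},\Stacked{W})$ is block-triangular with the identity in its $\Stacked{Y}$-block, so it reduces to $\big|J_{\Pi_{\Stacked{X}'\Stacked{Y}}}(\Stacked{B},\Stacked{U})\big|$, which is available in closed form from Proposition \ref{prop:jacobian}. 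Applying this transformation and restricting to the selection region $\mathcal{O}$ yields the joint density of $(\Stacked{Y},\Stacked{B},\Stacked{U})$ on $\mathcal{O}$ as $\big|J_{\Pi_{\Stacked{X}'\Stacked{Y}}}(\Stacked{B},\Stacked{U})\big|\cdot\prod_{k}\rho\big(y^{(k)};X^{(k)}_{E_k}\beta^{(k)}_{E_k},\sigma_k^2\rm{I}_{n_k}\big)\cdot\rho\big(\Pi_{\Stacked{X}'\Stacked{Y}}(\Stacked{B},\Stacked{U});\mathbf{0},\bm{\Omega}\big)$; for this to be a legitimate change of variables I would also record that $\Pi_{\Stacked{X}'\Stacked{Y}}$ is a bijection on the relevant domain, which holds task by task exactly as in the single-task randomized LASSO analysis and then stacks across $k$.

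Finally, to isolate the law of $\Stacked{Y}$, I would integrate this joint density over $(\Stacked{B},\Stacked{U})\in\mathcal{O}_0$ and divide by its total mass over $\mathcal{O}$, which, after undoing the change of variables, equals $\mathcal{N}_{\mathcal{O}}(\StackedSymbol{\beta}_{\Stacked{E}})$ --- that is, $\mathcal{N}_{\mathcal{O}}(\StackedSymbol{\beta}_{\Stacked{E}})$ is proportional, up to a factor free of $\StackedSymbol{\beta}_{\Stacked{E}}$, to the selection probability $P\{\h{\Stacked{E}}=\Stacked{E},\,\h{\Stacked{S}}=\Stacked{S}\}$. The decisive observation --- and the point that needs the most care to state correctly --- is that the inner integral $\int_{\mathcal{O}_0}\big|J_{\Pi_{\Stacked{X}'\Stacked{Y}}}\big|\,\rho\big(\Pi_{\Stacked{X}'\Stacked{Y}}(\Stacked{B},\Stacked{U});\mathbf{0},\bm{\Omega}\big)\,d\Stacked{B}\,d\Stacked{U}$ does depend on $\Stacked{Y}$ but carries no dependence on the parameter $\StackedSymbol{\beta}_{\Stacked{E}}$, because the randomization density is parameter-free and $\Pi_{\Stacked{X}'\Stacked{Y}}$ depends on the data only through $\Stacked{X}'\Stacked{Y}$. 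Hence, holding $\Stacked{Y}$ at its observed value and reading the resulting density as a function of $\StackedSymbol{\beta}_{\Stacked{E}}$, that integral is an irrelevant multiplicative constant, and the conditional likelihood reduces to $\big(\mathcal{N}_{\mathcal{O}}(\StackedSymbol{\beta}_{\Stacked{E}})\big)^{-1}\prod_{k=1}^{K}\rho\big(y^{(k)};X^{(k)}_{E_k}\beta^{(k)}_{E_k},\sigma_k^2\rm{I}_{n_k}\big)$, as claimed.
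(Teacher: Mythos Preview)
Your proposal is correct and follows essentially the same route as the paper: write the unconditional joint Gaussian law of $(\Stacked{Y},\Stacked{W})$, apply the K.K.T.\ change of variables $\Stacked{W}=\Pi_{\Stacked{X}'\Stacked{Y}}(\Stacked{B},\Stacked{U})$ with Jacobian from Proposition~\ref{prop:jacobian}, use the equivalence \eqref{Lee:event:characterization} to rewrite the selection event, and then observe that the factors arising from integrating out $(\Stacked{B},\Stacked{U})$ are free of $\StackedSymbol{\beta}_{\Stacked{E}}$. Your write-up is in fact more explicit than the paper's on two points the paper leaves implicit---the block-triangular structure that reduces the full Jacobian to $|J_{\Pi_{\Stacked{X}'\Stacked{Y}}}|$, and the parameter-independence of the inner $(\Stacked{B},\Stacked{U})$-integral---so nothing is missing.
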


\begin{proof} \ \  
Let $\mathcal{L}$ be the set of realizations of the responses and randomization variables that result in observing (10), that is
$$\mathcal{L} =\left\{ (\Stacked{Y}, \Stacked{W}): \h{\Stacked{E}}(\Stacked{Y}, \Stacked{W}) = \Stacked{E}, \; \h{\Stacked{S}}(\Stacked{Y}, \Stacked{W}) =\Stacked{S}\right\}.$$
We note that the joint law of $\Stacked{Y}$, $\Stacked{W}$ after conditioning out this observed event from (10)
is obtained by truncating their unconditional Gaussian law and is given by:
\begin{equation*}
\begin{aligned}
&\dfrac{\left\{\prod_{k=1}^K \rho\left(y^{(k)} ; X_{E_k}^{(k)}\beta_{E_k}^{(k)}, \sigma_k^2 \cdot \rm{I}_{n_k} \right) \cdot \rho(\omega^{(k)}; \mathbf{0}, \Omega^{(k)}) \right\}\cdot 1_{\mathcal{L}}(\Stacked{Y}, \Stacked{W})}{\int \left\{\prod_{k=1}^K \rho\left(\localvar{y}^{(k)}; X_{E_k}^{(k)}\beta_{E_k}^{(k)}, \sigma_k^2 \cdot \rm{I}_{n_k} \right) \cdot \rho(\localvar{w}^{(k)}; \mathbf{0}, \Omega^{(k)}) \right\}\cdot 1_{\mathcal{L}}(\localvar{Y}, \localvar{W})d\localvar{W} d\localvar{Y}}.
\end{aligned}
\end{equation*}
\noindent Using the change of variables mapping 
$$\Stacked{W} = \Pi_{\Stacked{X}'\Stacked{Y}}\begin{pmatrix}\Stacked{B} & \Stacked{U} \end{pmatrix} = \begin{pmatrix} \pi^{(1)}(b^{(1)},u^{(1)}) & \pi^{(2)}(b^{(2)}, u^{(2)})  &\cdots & \pi^{(K)}(b^{(K)},u^{(K)})\end{pmatrix} $$
defined in (11), the conditional law for $\Stacked{Y}$, $\Stacked{B}$ and $\Stacked{U}$ is equal to
\begin{equation*}
\begin{aligned}
&\mathcal{N}_{\mathcal{O}}(\StackedSymbol{\beta}_{\Stacked{E}})^{-1} \Big\lvert{ \small{\left(J_{\Pi_{\Stacked{X}'\Stacked{Y}}} \left(\Stacked{B}, \Stacked{U} \right)\right)}}\Big\rvert \prod_{k=1}^K \rho\left(y^{(k)}; X_{E_k}^{(k)}\beta_{E_k}^{(k)}, \sigma_k^2 \cdot \rm{I}_{n_k} \right) \\ & \;\;\;\;\;\;\;\;\; \cdot \exp\big(-\frac{1}{2}\sum_{k=1}^K (\pi^{(k)}(b^{(k)},u^{(k)}))' {\Omega^{(k)}}^{-1} \pi^{(k)}(b^{(k)},u^{(k)})\big) 1_{\left\{b^{(k)} >\mathbf{0}, \|u^{(k)}\|_{\infty}<1 \text{ for } k \in [K] \right\}} (\Stacked{B}, \Stacked{U}),
\end{aligned}
\end{equation*}
where we have used the equivalence of 
$$\h{\Stacked{E}}(\Stacked{Y}, \Stacked{W}) = \Stacked{E}, \; \h{\Stacked{S}}(\Stacked{Y}, \Stacked{W}) =\Stacked{S}$$
to the following constraints
$$b^{(k)} >\mathbf{0}, \|u^{(k)}\|_{\infty}<1 \text{ for all } k \text{ in } \{1,2,\cdots,K\}.$$
Ignoring constants in $\StackedSymbol{\beta}_{\Stacked{E}}$ in the above expression, we deduce the claim in the Proposition.
\end{proof}

The adjusted likelihood for this conditional prescription, seemingly a first-line recourse for post-selection inference, does not lend itself towards tractable estimating equations. Note that there is no clear method to compute the MLE given the following system of estimating equations:
$$
\dfrac{\partial  }{\partial \beta_{E_k}^{(k)}} \log \mathcal{N}_{\mathcal{O}}(\StackedSymbol{\beta}_{\Stacked{E}})\Big\lvert_{\Mb_{\Stacked{E}}}= \frac{1}{\sigma_k^2}{X^{(k)}_{E_k}}^{\prime} \Big(y^{(k)} - X^{(k)}_{E_k} \beta_{E_k}^{(k)}\Big)\Big\lvert_{\Mb_{\Stacked{E}}} \;\;
\text{ for } k \in [K]
$$
The absence of closed-form expressions for the partial derivatives of the Jacobian $J_{\Pi_{\Stacked{X}'\Stacked{Y}}}(\cdot)$ and the normalizing constant $\mathcal{N}_{\mathcal{O}}(\StackedSymbol{\beta}_\Stacked{E})$ poses a substantial obstacle in solving these equations. An attempt instead to numerically integrate out $Kp+ N$ variables in the derivatives of the normalizing constant will turn out to be quite futile from an angle of computing efficiency.

\subsection{Simulated Experiments}
\label{App:sec3}

\subsubsection{Deviation from Gaussian distribution}

We continue with the simulations in Section 4, and investigate the performance of our methods as the response variables deviate from a Gaussian distribution.
Figure \ref{fig:robustness} shows the distribution of coverage rates for our method when the models errors are drawn from Exponential and Laplace distributions.

\begin{figure}[h!]
  \centering
    \includegraphics[width=\linewidth]{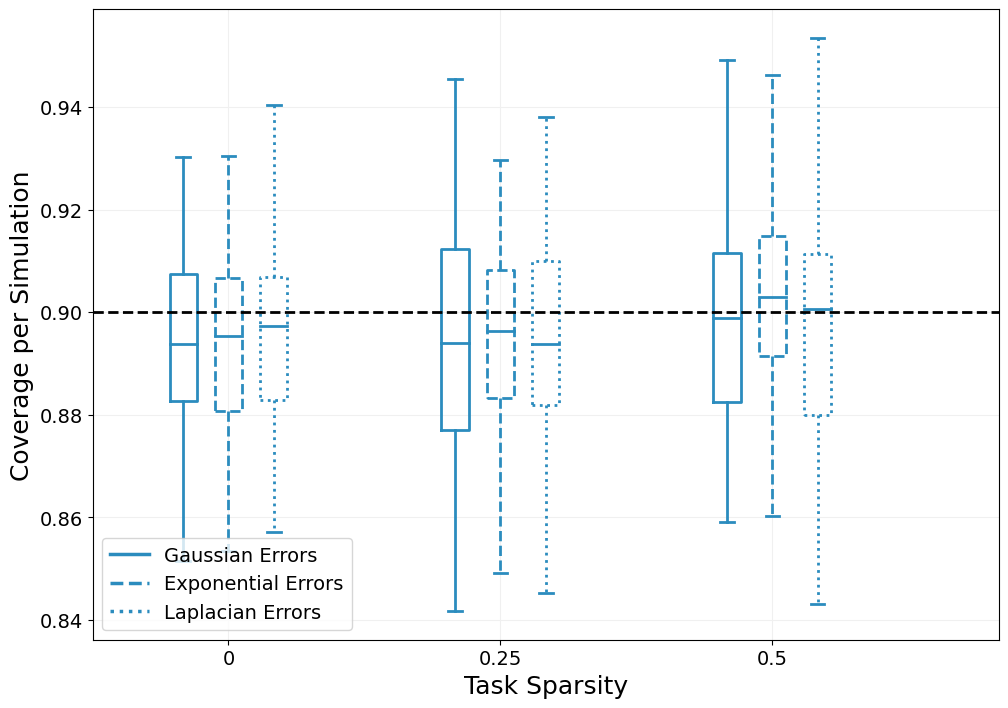}
  \caption{Coverage rates for ``MTL + SI'' as a function of task sparsity $s_T$. 
}
\label{fig:robustness}
\end{figure}

We note that ``MTL + SI'' continues to attain nominal coverage levels on synthetic data matched to the dimensions and estimated sparsity level of the ABCD data even when the response variables deviate from the Gaussian distribution. 
One expects this behavior because the added Gaussian randomization ensures asymptotically-valid selective inference just as does data-splitting, thanks to a Central Limit Theorem proved in \cite{panigrahi2019carving}.

\subsubsection{Varying Sparsity and Dimensions}
In this section, we design three sets of examples to study the effects of (1) varying the heterogeneity in predictors across tasks  by varying task sparsity $s_T$ at a fixed value of global sparsity $s_G$; (2) varying global sparsity $s_G$ at a fixed value of task sparsity $s_T$; and (3) varying the number of predictors $p$, fixing $s_T$ and the total number of active predictors. 
In all of the examples, we follow the generative scheme in Section 4, and set $n=500$, $K=5$, $\rho=0.3$, and $\alpha=0.1$. For scenario (1), we additionally fix $p=100$,  $s_{G}=0.9$, and vary task sparsity $s_{T}$ from $0$ to $0.6$.  The signal-to-noise ratio per task, defined as 
$$\text{SNR} = \frac{{\beta^{(k)}}' \mathcal{T}(\rho) \beta^{(k)}}{n_k \sigma^2},$$
averages around 0.1 across tasks in the first set of experiments. For scenario (2), we fix $p=100$, $s_T= 0.2$, and vary $s_G$ between $0.8$ and $0.95$. The average SNR per task is around $0.5$ for global sparsity of 80 percent and decreases with increasing global sparsity. In both of the first two scenarios, we perform $100$ replications. The optimal tuning parameter for each method is  the value on the $\lambda$-path that yields the lowest average MSE on the validation set across iterations, and the results are reported at that tuning parameter value for all 100 iterates.  For setting (3), we fix $t_{S}=0.2$, set $s_G$ so that the expected number of active predictors per task is 10, and vary $p$ between 100 and 1000. The average SNR-per task increases to around $2.5$ for the highest-dimensional setting to compensate for the added complexity. Due to the computational cost required to tune the model for larger $p$,  we use $10$ auxiliary iterations to choose tuning parameters and then use these parameters for the $100$ iterations we report.
The same metrics, \text{CR}, averaged lengths of confidence intervals, and \text{F1}, are used to evaluate performance.

We start from examining the results over the entire path of the tuning parameter $\lambda$.   Figure \ref{fig:lambda_path} shows the coverage, interval length, and F1 score for a range of values of $\lambda$  for the setting $s_G=0.9$, $s_T=0.2$, and $p=100$.   The distribution of coverage for  the ``naive" method underscores the dangers of ignoring model selection bias, with much higher variability in coverage than the post-selection methods. Reassuringly, all the post-selection methods achieve the nominal coverage rate of  $0.9$, regardless of the choice of tuning parameter and subsequent quality of model selected. The interval lengths along the path of $\lambda$ show that the methods based on randomization consistently produce tighter confidence intervals than the equivalent data splitting approach. 
However, it is not possible to directly compare the accuracy of inference from each method at the same, fixed value of the tuning parameter since the quality of the model selected by each method varies considerably along with $\lambda$.  The F1 score, for example, achieves its best value at different points along the path for different methods.  
The colored dots in the top panel of Figure~\ref{fig:vary_task_sparsity} show the value of $\lambda$ that yielded the lowest mean squared error (MSE) on the holdout data set.  From this point on, we will compare methods using the optimal value of the tuning parameter, chosen by minimizing the MSE on the validation data, for each method.  

\begin{figure}[h!]
  \centering
    \includegraphics[width=\linewidth]{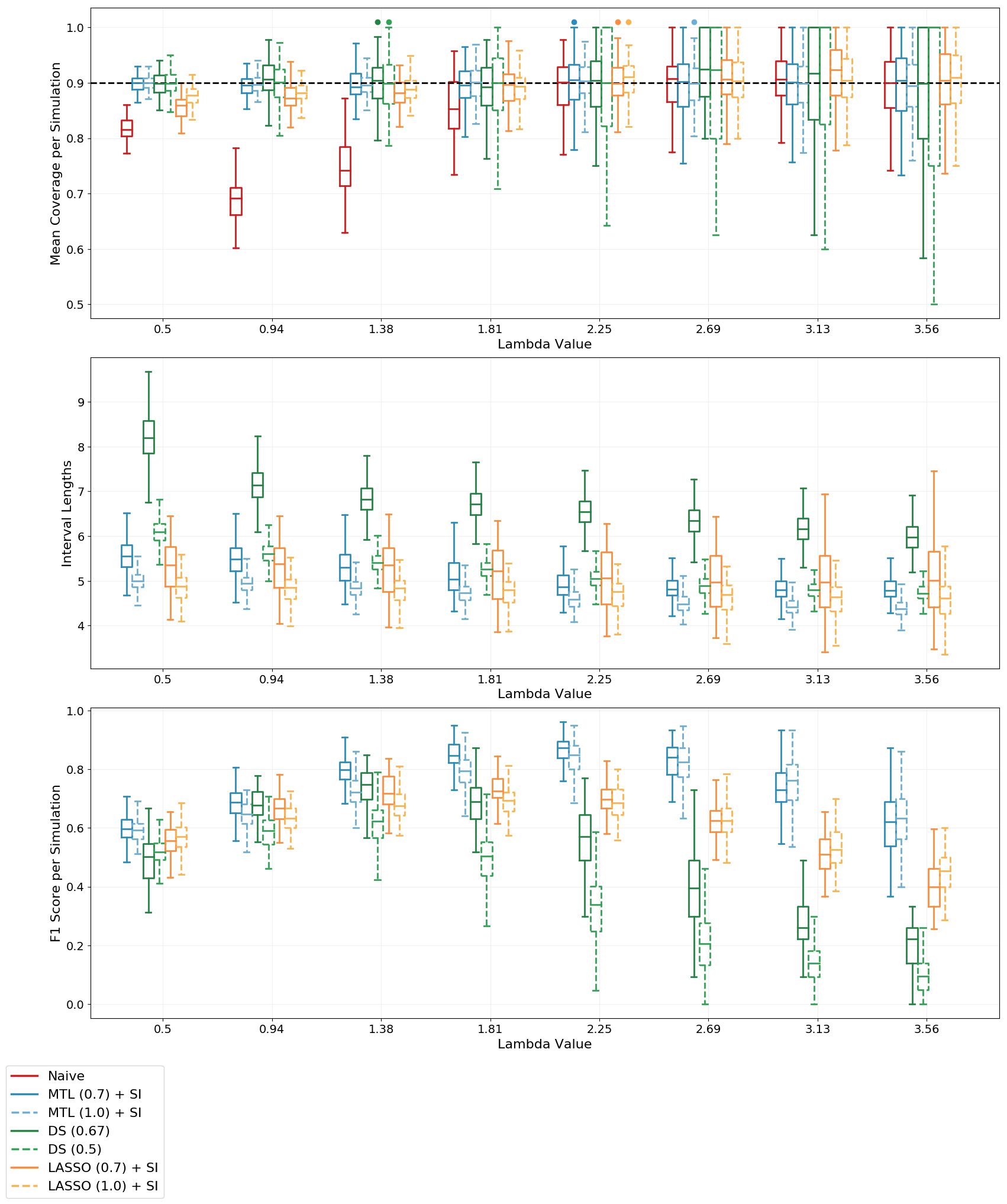}
  \caption{Coverage, length, and accuracy along the path of $\lambda$ for the setting $s_G=0.9$, $s_T=0.2$, and $p=100$. The solid and dashed lines are used to group methods that use a comparable amount of information for model selection; randomization parameters $v=0.70$ and $v=1$ are comparable to data splitting with $s=0.67$ and $s=0.50$, respectively. 
}
  \label{fig:lambda_path}
\end{figure}

Figure \ref{fig:vary_task_sparsity} shows the distribution of coverage, interval length, and the F1 score as we vary the task sparsity $s_T$ from 0 to 0.6, holding $s_G$ and $p$ fixed.    All methods consistently attain nominal or near-nominal coverage, regardless of task sparsity.      Selective inference  (MTL $+$ SI) yields consistently shorter intervals than data splitting  (DS), illustrating the benefit of  using the whole sample rather than splitting.   
At all levels of task sparsity, but especially when there is more homogeneity in active predictors across tasks, MTL methods with selective inference also achieve better accuracy than single-task LASSO with selective inference as measured by the F1 score.

\begin{figure}[h!]
  \centering
    \includegraphics[width=\linewidth]{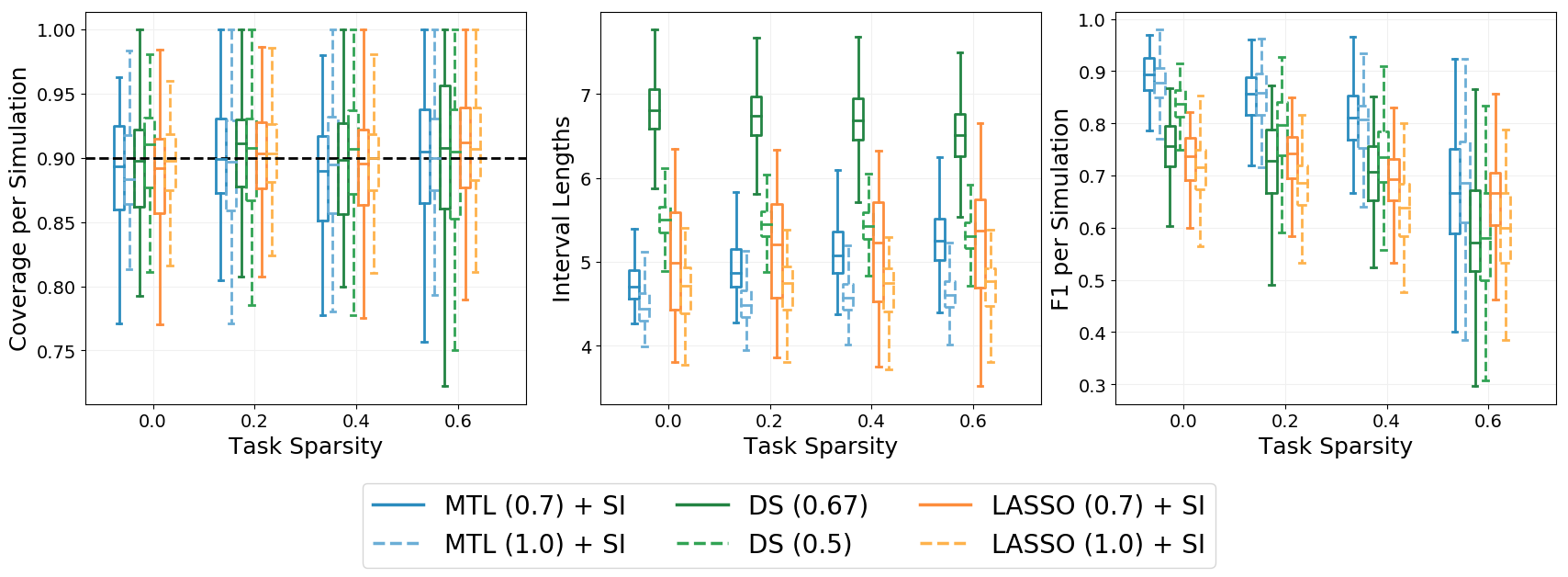}
  \caption{Results for 100 replications as a function of task sparsity $s_T$, fixing global sparsity $s_G = 0.9$ and $p=100$.  Top: coverage; middle: interval length; bottom: F1 score.}
  \label{fig:vary_task_sparsity}
\end{figure}

Figure \ref{fig:vary_global_sparsity}  shows the same patterns continue to hold when we vary global sparsity  while keeping task sparsity fixed.   We only vary global sparsity between 0.8 and 0.95 because the problem becomes very easy for all methods at lower levels of global sparsity.      MTL$+$ SI again outperforms other methods on both interval length and accuracy at all levels of global sparsity. We note, however, that the variance increases for all methods at larger levels of $s_G$.

\begin{figure}[h!]
  \centering
    \includegraphics[width=\linewidth]{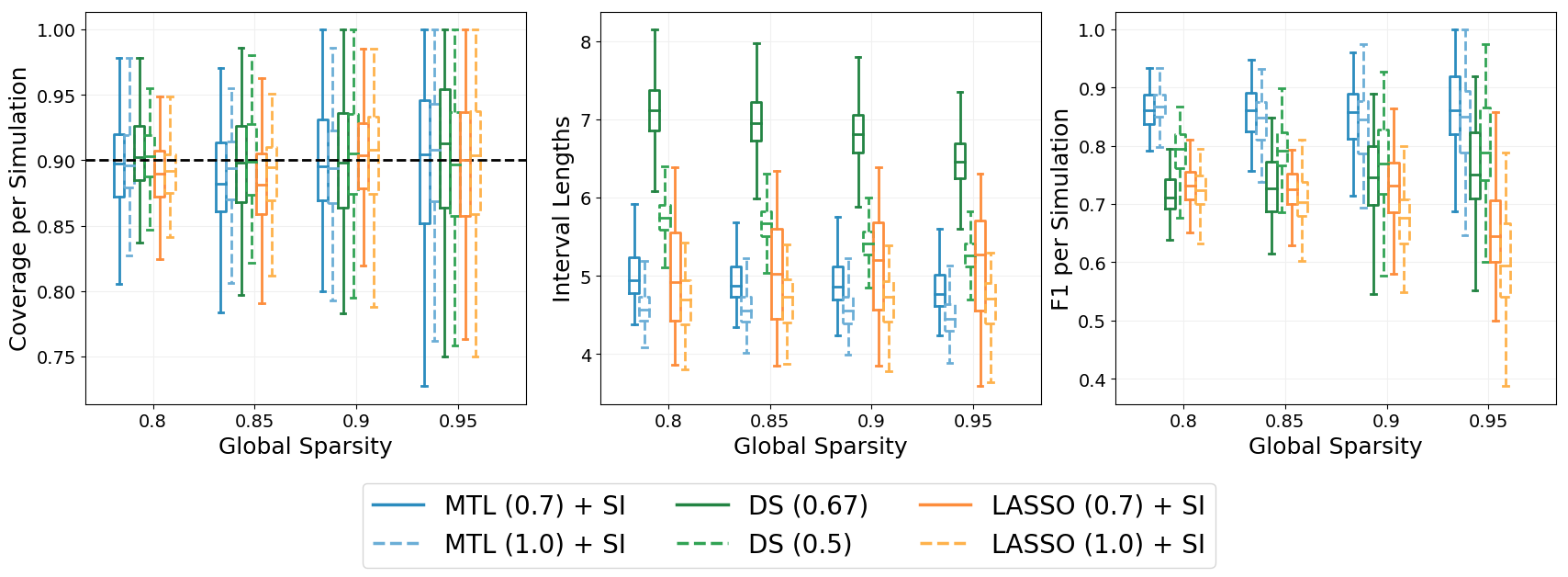}
  \caption{Results for 100 replications as a function of global sparsity $s_G$, fixing task sparsity $s_T = 0.2$ and $p=100$.   Left: coverage; middle: interval length; right: F1 score.   }
  \label{fig:vary_global_sparsity}
\end{figure}

Finally, Figure \ref{fig:vary_p} shows the results as a function of the number of predictors $p$.    To control the difficulty of the problem as $p$ changes, we fix the average number of active predictors per task at 10, rather than specifying the fraction of active predictors.   The average interval length for the selective inference methods grows slightly with $p$,  reflecting the cost of conditioning upon additional information.   Most overall patterns remain the same, with MTL $+$ SI providing the best performance. The accuracy of LASSO $+$ SI  declines faster with $p$ than that of other methods, indicating the growing cost of disregarding shared information.  In summary, our simulation results show that there is variability in performance of all methods, but multi-task learning with selective inference achieves the nominal coverage in all settings and provides the best performance, on average, compared to either data splitting or single-task prediction.   

\begin{figure}[h!]
  \centering
    \includegraphics[width=\linewidth]{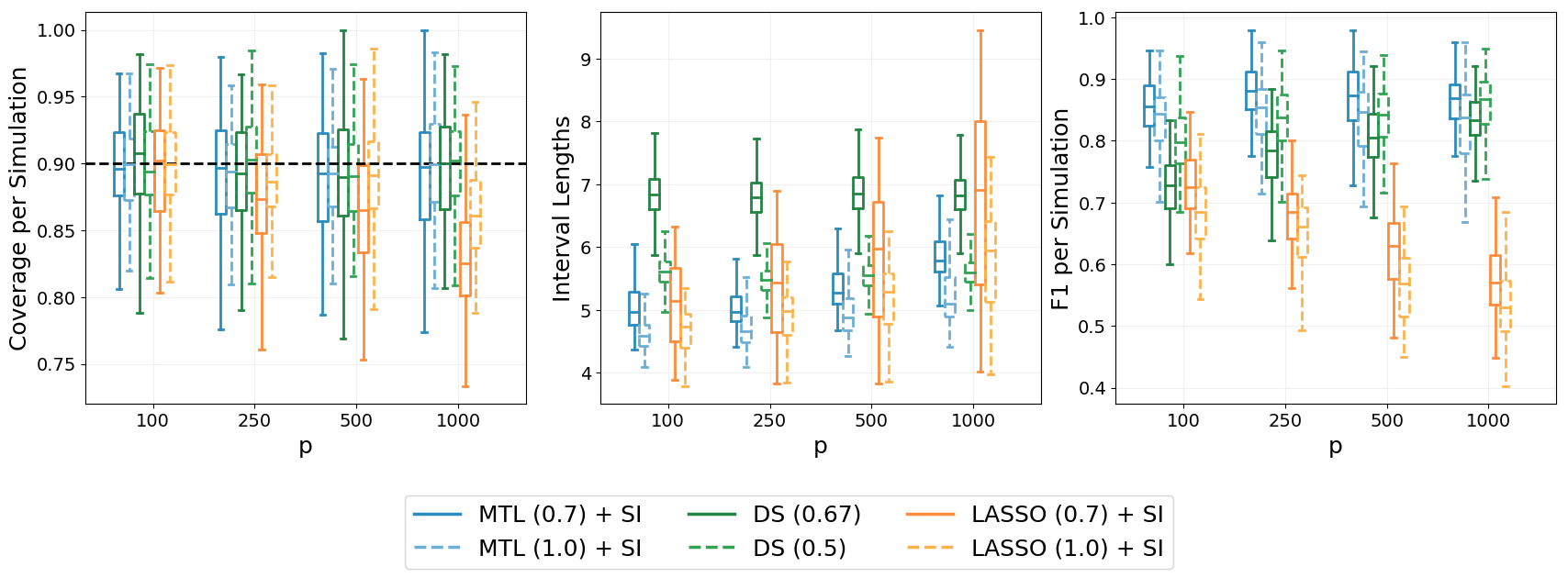}
  \caption{Results for 100 replications as a function of the number of predictors $p$,  fixing task sparsity $s_T = 0$ and an average of 10 active predictors per task. Left: coverage; middle: interval length; right: F1 score.   }
    \label{fig:vary_p}
\end{figure}

\end{document}